\documentclass[11pt]{article}

\usepackage{booktabs}
\usepackage[a4paper, margin=1in]{geometry}
\usepackage{amsmath, amssymb, amsthm}
\usepackage{mathtools}
\usepackage{authblk}
\usepackage{setspace}
\usepackage[hidelinks]{hyperref}
\usepackage[authoryear,round]{natbib}
\usepackage[verbose=silent]{microtype}
\usepackage{silence}
\theoremstyle{remark}
\newtheorem{remark}{Remark}
\newtheorem{assumption}{Assumption}
\newtheorem{proposition}{Proposition}
\newtheorem{lemma}{Lemma}
\newtheorem{theorem}{Theorem}

\title{\textbf{Functional Autoregression Without Truncation: A Continuous-Regularization Approach}}

\author[1]{Yao Zhao}
\affil[1]{Department of Statistics, Operations, and Data Science,
Temple University. Correspondence: \texttt{yaozhao16@temple.edu}}

\date{\today}

\begin{document}
\maketitle

\begin{abstract}
Functional autoregressive models of order one (FAR(1)) are predominantly
estimated by projecting curves onto leading functional principal components
and fitting a vector autoregression in score space, requiring a discrete
truncation level $K$ chosen by an \emph{ad hoc} variance threshold. 
We demonstrate via Monte Carlo experiments that the truncation choice is both consequential and highly regime dependent: the optimal $K$ can differ by an order of magnitude across data-generating regimes, while commonly used high variance thresholds (95\%, 99\%) lead to substantial forecast deterioration, inflating error by up to $35 \%$ relative to an oracle benchmark.
We propose a Tikhonov-regularized estimator
$\widehat{\Psi}_\alpha = \widehat{C}_1(\widehat{C}_0 + \alpha I)^{-1}$
that replaces the discrete truncation choice with a continuous regularization parameter,
selected in a data-driven manner.
 We establish the convergence rate
$n^{-\beta/(2(\beta+1))}$ under a source condition with smoothness
parameter $\beta \in (0, 1]$, achieving the saturation rate $n^{-1/4}$
for smoother targets.
 Across three contrasting regimes and four sample sizes,
the proposed estimator closely tracks the oracle-best FPCA rule and
outperforms it in the most challenging wide-spectrum regime, without
prior knowledge of the effective operator dimension.
  An application to 2{,}735 daily intraday PM10 curves
from Vienna confirms a 9.7\% reduction in mean forecast error relative to
the popular 80\% threshold and exhibits more stable parameter adaptation
across 16 winter seasons.
\end{abstract}

\vspace{0.5em}

\noindent\textbf{Keywords:} Functional time series; Functional autoregression;
Tikhonov regularization; Cross-covariance operator; Truncation selection;
Robustness.

\vspace{0.3em}

\section{Introduction}
\label{sec:intro}

\subsection{Functional time series and the FAR(1) paradigm}
\label{subsec:intro-paradigm}

Functional time series (FTS) analysis concerns sequentially observed random
elements of a function space, most commonly the separable Hilbert space
$L^2([0,1])$ with inner product $\langle f, g \rangle = \int_0^1 f(u)g(u)\,du$
and induced norm $\|\cdot\|$. 
A canonical example is the daily curve of intraday measurements, where each $X_t$ represents a full day of
observations recorded at regular intervals and is treated as a single functional object, with dependence across days. Such data arise in a wide range of applications, including electricity demand, air pollution concentration, yield curves, and temperature profiles \citep{Ramsay2005, KokoszkaReimherr2017}.

The most widely used model for such data is the functional autoregression of
order one, henceforth FAR(1), introduced by \citet{Bosq2000}:
\begin{equation}
\label{eq:far1}
  X_t(u) \;=\; \int_0^1 \psi(u,v)\,X_{t-1}(v)\,dv \;+\; \varepsilon_t(u),
  \qquad u \in [0,1], \; t = 1, \dots, n,
\end{equation}
where $\Psi$ is a bounded linear operator on $L^2([0,1])$ with integral kernel
$\psi(u,v)$, and $\{\varepsilon_t\}$ is a sequence of mean-zero i.i.d.\
innovations in $L^2$. Stationarity holds whenever the operator norm satisfies
$\|\Psi\|_{\mathrm{op}} < 1$.

Estimation of the operator $\Psi$ in~\eqref{eq:far1} is generally regarded as
an ill-posed inverse problem: the natural moment relation
$C_1 = \Psi\, C_0$, where $C_0 = \mathrm{Var}(X_t)$ and
$C_1 = \mathrm{Cov}(X_{t+1}, X_t)$ are the covariance and lag-one
cross-covariance operators, involves inverting the compact operator $C_0$,
whose eigenvalues accumulate at zero
\citep{Bosq2000, HormannKokoszka2010}. The dominant strategy in the
literature, and in essentially all software implementations, is to
\emph{regularize by truncation}: one applies functional principal component
analysis (FPCA) to the sample covariance operator, retains the leading $K$
eigenfunctions, and fits a $K$-dimensional vector autoregression in score
space. The resulting estimator of $\Psi$ is a finite-rank operator supported
on the span of the retained components
\citep{AueNorinhoHormann2015}.

\subsection{The truncation problem in practice}
\label{subsec:intro-truncation}

The truncation level $K$ is, in principle, a tuning parameter like any other,
but in practice it is rarely treated as such. Standard practice
\citep{KokoszkaReimherr2017} is to choose $K$ via an \emph{ad hoc} variance
threshold: retain enough components so that the cumulative share of the
sample covariance's eigenvalue mass exceeds $\tau \in \{0.80, 0.85, 0.90,
0.95, 0.99\}$. Which threshold to use is generally not justified from the
data at hand, and the implications for estimation and forecast performance
are seldom reported.

This practice is consequential for three reasons. First, the eigenvalues of
$C_0$ reflect the joint structure of the operator $\Psi$ \emph{and} the
innovation covariance; the same threshold can correspond to very different
$K$ across data-generating processes, and the $K$ that is best for estimating
$\Psi$ bears no necessary relation to the $K$ that explains a given
percentage of $\mathrm{Var}(X_t)$. Second, FPCA-based estimators are known to
be sensitive to truncation in related functional regression problems
\citep{HallHorowitz2007, CrambesKneipSarda2009}, and this sensitivity carries
over to the autoregressive setting, where estimation error compounds through
the autoregressive recursion.
Third, the discrete nature of the choice implies that even when the
practitioner adopts a data-driven rule, such as a cumulative variance
threshold, a scree test, or an information criterion applied to the score
space VAR, small changes in the data can alter $K$ by one or two units,
leading to discontinuous changes in the fitted operator and its forecasts.

Our Monte Carlo evidence, reported in Section~\ref{sec:sim}, bears this
out in sharp form. The optimal $K$ varies by an order of magnitude across
the three regimes we consider; the 95\% and 99\% thresholds, which are
arguably the most commonly cited defaults in applied FTS work
\citep{reimherr2016estimating}, are the worst performers in every regime
and at every sample size; and the gap between the best and worst FPCA
threshold we consider can exceed 30\% in one-step mean integrated squared
forecast error at small sample sizes. Since the correct choice depends on
features of $\Psi$ and of the innovation process that are not directly
observable, this creates a real risk of substantial forecast loss when
the practitioner guesses wrong.

\subsection{A continuous-regularization alternative}
\label{subsec:intro-proposal}

This paper studies an estimator of $\Psi$ that avoids the discrete
truncation choice altogether. Let $\widehat{C}_0$ and $\widehat{C}_1$
denote the sample covariance and sample lag-one cross-covariance
operators constructed from $X_1, \dots, X_n$. The \emph{Tikhonov-regularized
operator estimator} is
\begin{equation}
\label{eq:tikhonov-def}
  \widehat{\Psi}_\alpha
  \;=\;
  \widehat{C}_1 \bigl( \widehat{C}_0 + \alpha I \bigr)^{-1},
  \qquad \alpha > 0,
\end{equation}
where $I$ is the identity on $L^2([0,1])$ and $\alpha$ is a single scalar
regularization parameter. In contrast to FPCA-FAR, which zeroes out all but
the top $K$ directions of $\widehat{C}_0$, the estimator in
\eqref{eq:tikhonov-def} inverts all directions but dampens the small
eigenvalues smoothly: an eigenvalue $\widehat{\lambda}_k$ of $\widehat{C}_0$
contributes $(\widehat{\lambda}_k + \alpha)^{-1}$ rather than either $0$ or
$\widehat{\lambda}_k^{-1}$. 

Tikhonov regularization is a classical tool in the theory of ill-posed
inverse problems \citep{EnglHankeNeubauer1996, Cavalier2011} and has been
applied to functional linear regression \citep{HallHorowitz2007,
CrambesKneipSarda2009} and to related operator-estimation problems.
To our knowledge, the use of continuous regularization as a principled alternative to truncation in FAR(1) estimation, together with a systematic evaluation across contrasting regimes, has not been previously investigated.

Our contributions are fourfold.

First, we document the regime-dependence of the FPCA truncation choice
empirically. Across three contrasting regimes (low-rank, medium-rank, and
wide-spectrum operators) and four sample sizes ($n \in \{100, 200, 400,
800\}$), we show that no fixed variance threshold is uniformly best, and
that popular high thresholds are uniformly bad. 
The best-in-cell FPCA rule, defined via a cumulative variance threshold $\tau$, shifts from $\tau=0.85$ (retaining approximately $K \approx 3$ components) to $\tau=0.80$ (retaining approximately $K \approx 23$ ) depending on the regime.

Second, we show that the Tikhonov estimator in \eqref{eq:tikhonov-def}
closely tracks the
\emph{oracle-best} FPCA rule across the experimental design, and
outperforms it outright in the most challenging wide-spectrum regime.
The regret of Tikhonov-CV relative to an oracle who knows the regime and
picks the best threshold for it ranges from $-5.9\%$ to $+1.8\%$ across
the 12 cells we consider, and is negative in all four sample sizes of
the most challenging regime.
In the worst-case-across-regimes sense, which is the metric most relevant
for a practitioner who does not know the underlying regime, Tikhonov-CV
dominates every FPCA rule at every sample size.

Third, we provide a consistency result that places the estimator in the
standard inverse-problems framework. Under a source condition on $\Psi$
with smoothness parameter $\beta$, the rate $n^{-\beta/(2(\beta+1))}$
holds uniformly over operators satisfying the condition. This rate does
not depend on a truncation level and formalizes the sense in which the
estimator adapts to the smoothness of $\Psi$ without requiring the
analyst to anticipate its effective rank.

Fourth, we validate the methodology on a real functional time series
of moderate effective dimension: half-hourly intraday PM10 concentration
profiles recorded in Vienna, Austria. On this dataset, comprising 2{,}735
daily curves over 16 winter seasons, the proposed estimator achieves the
lowest mean integrated squared forecast error among six competing methods,
with a 9.7\% improvement over the popular 80\% variance threshold and
a 2.2\% improvement over the best FPCA rule. The empirical findings
mirror the simulation evidence and demonstrate that the robustness gains
of continuous regularization translate to applied functional time series
work where the effective operator dimension cannot be known in advance.

\subsection{Related literature}
\label{subsec:intro-lit}

The FAR(1) model dates to \citet{Bosq2000} and has been developed
extensively in the functional data analysis literature; textbook
treatments include \citet{HorvathKokoszka2012} and
\citet{KokoszkaReimherr2017}. FPCA-based estimation is the dominant
practical approach, with refinements including bootstrap inference
\citep{PaparoditisShang2021} and extensions to FAR($p$) and to
multivariate functional series \citep{AueNorinhoHormann2015}.

The dominant approach to FAR(1) estimation relies on projecting the data onto a finite-dimensional score space via FPCA and fitting a vector autoregression. While this approach is widely used in practice, it introduces a discrete truncation choice whose impact on estimation and forecasting performance can be substantial and highly regime dependent.

In contrast, relatively little work has considered estimation directly at the operator level without explicit truncation. This paper takes such an approach by formulating FAR(1) estimation as a regularized inverse problem, replacing the truncation parameter with a continuous regularization parameter and providing a framework for assessing robustness across regimes.

\subsection{Organization}
\label{subsec:intro-org}

The remainder of the paper is organized as follows.
Section~\ref{sec:methods} formalizes the FAR(1) model, reviews the
FPCA-FAR estimator with variance-threshold selection of $K$, and
introduces the Tikhonov-regularized estimator with cross-validated
$\alpha$. Section~\ref{sec:theory} establishes a convergence rate for
the proposed estimator under a source condition on $\Psi$.
Section~\ref{sec:sim} reports a Monte Carlo study comparing the
methods across three contrasting regimes and four sample sizes.
Section~\ref{sec:application} applies the methodology to Vienna PM10
data. Section~\ref{sec:discussion} concludes with practical guidance,
limitations, and directions for future work.

\section{Model and estimators}
\label{sec:methods}

\subsection{The FAR(1) model}
\label{subsec:methods-model}

Let $\mathcal{H} = L^2([0,1])$ denote the separable Hilbert space of
square-integrable real-valued functions on the unit interval, equipped with
the inner product $\langle f, g \rangle = \int_0^1 f(u) g(u)\,du$ and norm
$\|f\| = \langle f, f \rangle^{1/2}$. Let $\mathcal{L}(\mathcal{H})$ denote
the space of bounded linear operators on $\mathcal{H}$, equipped with the
operator norm $\|A\|_{\mathrm{op}} = \sup_{\|f\|\le 1} \|A f\|$. For a
Hilbert--Schmidt operator $A$ we write $\|A\|_{\mathrm{HS}}$ for its
Hilbert--Schmidt norm; when $A$ is represented by an integral kernel
$a(u,v)$, we have $\|A\|_{\mathrm{HS}}^2 = \int_0^1 \int_0^1 a(u,v)^2\,du\,dv$.

We consider the functional autoregressive process of order one,
\begin{equation}
\label{eq:far1-intro}
  X_t \;=\; \Psi\, X_{t-1} \;+\; \varepsilon_t,
  \qquad t \in \mathbb{Z},
\end{equation}
where $\Psi \in \mathcal{L}(\mathcal{H})$ is an unknown Hilbert--Schmidt
operator and $\{\varepsilon_t\}$ is a sequence of i.i.d.\ zero-mean
$\mathcal{H}$-valued innovations with
$\mathbb{E}\|\varepsilon_t\|^2 < \infty$. We assume throughout the
stationarity condition
\begin{equation}
\label{eq:stationarity}
  \|\Psi\|_{\mathrm{op}} \;<\; 1,
\end{equation}
under which a unique stationary causal solution $X_t = \sum_{j \ge 0}
\Psi^j \varepsilon_{t-j}$ to~\eqref{eq:far1-intro} exists in $\mathcal{H}$
with $\mathbb{E}\|X_t\|^2 < \infty$ \citep{Bosq2000}. When $\Psi$ has an
integral kernel representation, $(\Psi f)(u) = \int_0^1 \psi(u,v) f(v)\,dv$,
condition~\eqref{eq:stationarity} is implied by the more restrictive but
easier-to-verify Hilbert--Schmidt condition $\int_0^1 \int_0^1
\psi(u,v)^2\,du\,dv < 1$.

Without loss of generality we assume $\mathbb{E} X_t = 0$; in practice the
sample mean is subtracted before estimation.
Define the population covariance and lag-one cross-covariance operators
\begin{equation}
\label{eq:C0C1-pop}
  C_0 \;=\; \mathbb{E}\bigl[\,X_t \otimes X_t\,\bigr],
  \qquad
  C_1 \;=\; \mathbb{E}\bigl[\,X_{t+1} \otimes X_t\,\bigr],
\end{equation}
where $(f \otimes g)(h) = \langle g, h \rangle f$ denotes the tensor product
of two elements of $\mathcal{H}$. Under~\eqref{eq:far1-intro}
and~\eqref{eq:stationarity}, both operators are of trace class, and the
fundamental moment identity
\begin{equation}
\label{eq:moment}
  C_1 \;=\; \Psi\, C_0
\end{equation}
holds \citep[Chapter~3]{Bosq2000}. The estimation of $\Psi$ is thus
formally an inverse problem: given estimators $\widehat{C}_0$ and
$\widehat{C}_1$, one seeks $\widehat{\Psi}$ satisfying
$\widehat{C}_1 \approx \widehat{\Psi} \widehat{C}_0$. Since $C_0$ is
compact and its eigenvalues accumulate at zero, the operator $C_0^{-1}$ is
unbounded, and direct inversion of $\widehat{C}_0$ is statistically
unstable.

Given a sample $X_1, \dots, X_n$, the natural estimators of the population
operators are the sample moments
\begin{equation}
\label{eq:C0C1-sample}
  \widehat{C}_0 \;=\; \frac{1}{n}\sum_{t=1}^n (X_t - \bar{X}) \otimes (X_t - \bar{X}),
  \qquad
  \widehat{C}_1 \;=\; \frac{1}{n-1}\sum_{t=1}^{n-1} (X_{t+1} - \bar{X}) \otimes (X_t - \bar{X}),
\end{equation}
where $\bar{X} = n^{-1} \sum_{t=1}^n X_t$. Under the mild dependence
conditions of \citet{HormannKokoszka2010} (theorem 3.1), both estimators are
$\sqrt{n}$-consistent in Hilbert--Schmidt norm.

\subsection{FPCA-based estimator with truncation level \texorpdfstring{$K$}{K}}
\label{subsec:methods-fpca}

The conventional approach to estimating $\Psi$ from the sample operators
in~\eqref{eq:C0C1-sample} is to regularize by truncation. Let
\begin{equation}
\label{eq:fpca-eigs}
  \widehat{C}_0\, \widehat{\phi}_k \;=\; \widehat{\lambda}_k\, \widehat{\phi}_k,
  \qquad k = 1, 2, \dots,
\end{equation}
denote the eigenvalue--eigenfunction pairs of $\widehat{C}_0$, ordered so
that $\widehat{\lambda}_1 \ge \widehat{\lambda}_2 \ge \dots \ge 0$ and with
$\{\widehat{\phi}_k\}$ forming an orthonormal basis of $\mathcal{H}$. For a
chosen truncation level $K \in \{1, 2, \dots, n-1\}$, the FPCA-FAR
estimator is defined as follows.

\paragraph{Score-space regression.}
Compute the functional principal component scores
\begin{equation}
\label{eq:fpca-scores}
  \widehat{\xi}_{t,k} \;=\; \langle X_t - \bar{X},\, \widehat{\phi}_k \rangle,
  \qquad t = 1, \dots, n, \ \ k = 1, \dots, K,
\end{equation}
and collect them into vectors $\widehat{\boldsymbol{\xi}}_t = (\widehat{\xi}_{t,1},
\dots, \widehat{\xi}_{t,K})^\top \in \mathbb{R}^K$. Fit the $K$-dimensional
vector autoregression
\begin{equation}
\label{eq:fpca-var}
  \widehat{\boldsymbol{\xi}}_{t+1} \;=\; \mathbf{A}\, \widehat{\boldsymbol{\xi}}_t \;+\; \boldsymbol{\eta}_t
\end{equation}
by ordinary least squares, yielding
$\widehat{\mathbf{A}} = \bigl(\sum_{t=1}^{n-1} \widehat{\boldsymbol{\xi}}_t
\widehat{\boldsymbol{\xi}}_t^\top \bigr)^{-1}
\bigl(\sum_{t=1}^{n-1} \widehat{\boldsymbol{\xi}}_t \widehat{\boldsymbol{\xi}}_{t+1}^\top\bigr)$.

\paragraph{Operator reconstruction.}
The FPCA-FAR estimator of $\Psi$ is the finite-rank operator with kernel
\begin{equation}
\label{eq:fpca-psi}
  \widehat{\psi}_K^{\mathrm{FPCA}}(u,v)
  \;=\;
  \sum_{j=1}^K \sum_{k=1}^K \widehat{A}_{jk}\,
  \widehat{\phi}_k(u)\, \widehat{\phi}_j(v),
\end{equation}
equivalently
$\widehat{\Psi}_K^{\mathrm{FPCA}} = \sum_{j,k=1}^K \widehat{A}_{jk}\,
\widehat{\phi}_k \otimes \widehat{\phi}_j$. By construction,
$\widehat{\Psi}_K^{\mathrm{FPCA}}$ has rank at most $K$ and is supported on
$\mathrm{span}\{\widehat{\phi}_1, \dots, \widehat{\phi}_K\}$; all variation
of $X_{t+1}$ that is orthogonal to this subspace is treated as
unpredictable.

\paragraph{Selection of \texorpdfstring{$K$}{K}.}
The truncation level is almost universally chosen via an \emph{ad hoc}
cumulative variance rule,
\begin{equation}
\label{eq:fpca-tau}
  K(\tau) \;=\; \min\Bigl\{\, k \;:\;
  \frac{\sum_{i=1}^k \widehat{\lambda}_i}{\sum_{i \ge 1} \widehat{\lambda}_i}
  \;\ge\; \tau \,\Bigr\},
\end{equation}
with $\tau \in \{0.80, 0.85, 0.90, 0.95, 0.99\}$ the common defaults.
Alternative data-driven rules include scree tests, information criteria
(AIC or BIC) applied to the score-space VAR~\eqref{eq:fpca-var}, and
cross-validation over a grid of candidate $K$.

\subsection{Tikhonov-regularized operator estimator}
\label{subsec:methods-tikhonov}

The estimator we propose regularizes the inversion of $\widehat{C}_0$
in~\eqref{eq:moment} continuously, replacing the discrete truncation
choice by a single scalar tuning parameter. For a regularization
parameter $\alpha > 0$, define
\begin{equation}
\label{eq:tikh-def}
  \widehat{\Psi}_\alpha
  \;=\;
  \widehat{C}_1 \bigl( \widehat{C}_0 + \alpha I \bigr)^{-1},
\end{equation}
where $I$ is the identity operator on $\mathcal{H}$. The operator
$\widehat{C}_0 + \alpha I$ has spectrum bounded below by $\alpha > 0$ and
is therefore boundedly invertible on $\mathcal{H}$; the estimator
$\widehat{\Psi}_\alpha$ is well-defined for any $\alpha > 0$ regardless of
the rank of $\widehat{C}_0$.

Expressing $\widehat{C}_0 = \sum_{k \ge 1} \widehat{\lambda}_k\,
\widehat{\phi}_k \otimes \widehat{\phi}_k$ in its spectral decomposition,
\begin{equation}
\label{eq:tikh-spectral}
  \bigl( \widehat{C}_0 + \alpha I \bigr)^{-1}
  \;=\;
  \sum_{k \ge 1} \frac{1}{\widehat{\lambda}_k + \alpha}\,
  \widehat{\phi}_k \otimes \widehat{\phi}_k.
\end{equation}

Equivalently, combining~\eqref{eq:tikh-def} and~\eqref{eq:tikh-spectral},
\begin{equation}
\label{eq:tikh-series}
  \widehat{\Psi}_\alpha
  \;=\;
  \sum_{k \ge 1} \frac{1}{\widehat{\lambda}_k + \alpha}\,
  \bigl( \widehat{C}_1\, \widehat{\phi}_k \bigr) \otimes \widehat{\phi}_k.
\end{equation}
The contrast with the FPCA-FAR estimator is direct. First,
the summation extends over the full (possibly infinite) eigenbasis of
$\widehat{C}_0$, with no discrete cut-off; the role of $\alpha$ is to
control how much weight is placed on small-eigenvalue directions. Second,
the estimator is a linear functional of the sample cross-covariance
operator $\widehat{C}_1$, which depends on the lagged product
$X_{t+1} \otimes X_t$ and therefore directly targets the dynamics of
interest, rather than projecting onto a basis determined purely by the
marginal covariance structure of $\{X_t\}$.

The estimator~\eqref{eq:tikh-def} arises directly from regularizing
the population moment identity~\eqref{eq:moment}. Since $C_0$ is
compact with eigenvalues accumulating at zero, the equation
$C_1 = \Psi\, C_0$ cannot be inverted stably. Replacing $C_0$ by the
strictly positive operator $C_0 + \alpha I$ yields the well-posed
equation $C_1 = \Psi (C_0 + \alpha I)$, whose unique solution is the
population-level Tikhonov operator
$\Psi_\alpha = C_1 (C_0 + \alpha I)^{-1}$; the
estimator~\eqref{eq:tikh-def} is its sample analogue. 

\begin{remark}
\label{rem:symm}
We work with the one-sided formulation $\widehat{C}_1(\widehat{C}_0 + \alpha
I)^{-1}$. A two-sided symmetric alternative,
$(\widehat{C}_0 + \alpha I)^{-1/2} \widehat{C}_1 (\widehat{C}_0 + \alpha
I)^{-1/2}$, has appeared in some inverse-problem treatments; we do not
pursue it here because (i) it does not reduce to a population target
satisfying~\eqref{eq:moment} as $\alpha \to 0$, and (ii) its empirical
performance in preliminary experiments was indistinguishable from the
one-sided form. The asymmetric form in~\eqref{eq:tikh-def} is also the
natural analogue of the ordinary-least-squares estimator in the
score-space regression~\eqref{eq:fpca-var}.
\end{remark}

\subsection{Discretization}
\label{subsec:methods-discretization}

In practice curves are observed on a finite grid. Suppose each $X_t$ is
observed at $M$ points $0 = u_1 < u_2 < \dots < u_M = 1$; we denote the
resulting vector by $\mathbf{x}_t \in \mathbb{R}^M$ with
$(\mathbf{x}_t)_i = X_t(u_i)$. All inner products and integrals in what
follows are approximated by the trapezoidal rule with weights
\begin{equation}
\label{eq:trap-weights}
  w_i \;=\;
  \begin{cases}
    (u_{i+1} - u_{i-1})/2 & 2 \le i \le M-1, \\
    (u_2 - u_1)/2         & i = 1, \\
    (u_M - u_{M-1})/2     & i = M.
  \end{cases}
\end{equation}
Let $\mathbf{W} = \mathrm{diag}(w_1, \dots, w_M) \in \mathbb{R}^{M \times M}$
and $\mathbf{W}^{1/2} = \mathrm{diag}(\sqrt{w_1}, \dots, \sqrt{w_M})$.

The sample moments~\eqref{eq:C0C1-sample} are then represented by the
$M \times M$ matrices
\begin{equation}
\label{eq:disc-moments}
  \widehat{\mathbf{C}}_0 \;=\; \frac{1}{n} \sum_{t=1}^n (\mathbf{x}_t - \bar{\mathbf{x}})(\mathbf{x}_t - \bar{\mathbf{x}})^\top,
  \qquad
  \widehat{\mathbf{C}}_1 \;=\; \frac{1}{n-1} \sum_{t=1}^{n-1} (\mathbf{x}_{t+1} - \bar{\mathbf{x}})(\mathbf{x}_t - \bar{\mathbf{x}})^\top.
\end{equation}
The action of the operator with kernel $a(u,v)$ on a function $f$,
computed by trapezoidal quadrature, is $(Af)(u_i) \approx \sum_{j=1}^M
a(u_i, u_j) f(u_j) w_j$, which in matrix form reads
$\mathbf{A} \mathbf{W} \mathbf{f}$ with $\mathbf{A}_{ij} = a(u_i, u_j)$. To
work consistently in this inner-product structure, we pass to the
weighted representation
\begin{equation}
\label{eq:weighted-moments}
  \widetilde{\mathbf{C}}_0 \;=\; \mathbf{W}^{1/2}\, \widehat{\mathbf{C}}_0\, \mathbf{W}^{1/2},
  \qquad
  \widetilde{\mathbf{C}}_1 \;=\; \mathbf{W}^{1/2}\, \widehat{\mathbf{C}}_1\, \mathbf{W}^{1/2},
\end{equation}
which are the matrix representations of the sample covariance and
cross-covariance in an orthonormal basis for the discretized space.

The discretized Tikhonov estimator is then
\begin{equation}
\label{eq:tikh-discrete}
  \widetilde{\boldsymbol{\Psi}}_\alpha
  \;=\;
  \widetilde{\mathbf{C}}_1\,
  \bigl( \widetilde{\mathbf{C}}_0 + \alpha \mathbf{I}_M \bigr)^{-1},
  \qquad
  \widehat{\boldsymbol{\Psi}}_\alpha
  \;=\;
  \mathbf{W}^{-1/2}\, \widetilde{\boldsymbol{\Psi}}_\alpha\, \mathbf{W}^{-1/2},
\end{equation}
where $\widehat{\boldsymbol{\Psi}}_\alpha$ is the $M \times M$ matrix whose
$(i,j)$ entry estimates $\psi(u_i, u_j)$. The prediction of a new curve
is computed by trapezoidal integration:
$\widehat{X}_{t+1}(u_i) = \sum_{j=1}^M \widehat{\boldsymbol{\Psi}}_\alpha(i,j)
X_t(u_j) w_j$.

\begin{remark}
\label{rem:weighted}
The passage to the weighted representation in~\eqref{eq:weighted-moments}
is not merely cosmetic. Omitting the weights produces an estimator that
is invariant to simple resampling of the grid but inconsistent with the
$L^2$ inner product that defines the underlying function-space
parameters; this manifests as order-of-magnitude errors in comparisons
across estimators, especially when grids are non-uniform or when
methods are compared across different grid resolutions. All numerical
results in this paper use the weighted form
in~\eqref{eq:weighted-moments} and~\eqref{eq:tikh-discrete}.
\end{remark}

The discretized FPCA-FAR estimator is constructed analogously: the
eigenvalue problem~\eqref{eq:fpca-eigs} becomes the symmetric eigenvalue
problem $\widetilde{\mathbf{C}}_0 \mathbf{v}_k = \widehat{\lambda}_k
\mathbf{v}_k$, with $\widehat{\boldsymbol{\phi}}_k = \mathbf{W}^{-1/2}
\mathbf{v}_k$ recovering the eigenfunction on the grid after
renormalization in the $L^2$ inner product. The remaining steps
in~\eqref{eq:fpca-scores} through~\eqref{eq:fpca-psi} are implemented by
trapezoidal integration with the same weights~\eqref{eq:trap-weights}.

\section{Theoretical properties}
\label{sec:theory}

This section establishes consistency and rates of convergence for the
Tikhonov estimator $\widehat{\Psi}_\alpha$ defined
in~\eqref{eq:tikh-def}. The analysis proceeds in three steps. First, we
state the assumptions on the process $\{X_t\}$ and on the unknown
operator $\Psi$ (Section~\ref{subsec:theory-assumptions}). Second, we
decompose the estimation error into a deterministic regularization bias
and a stochastic term and analyze each
(Sections~\ref{subsec:theory-bias} and~\ref{subsec:theory-variance}).
Third, we combine the two to obtain a rate of convergence for the
infeasible estimator $\widehat{\Psi}_{\alpha_n}$ with a deterministic
sequence $\alpha_n$ tuned to the smoothness of $\Psi$
(Section~\ref{subsec:theory-rate}).

\subsection{Assumptions}
\label{subsec:theory-assumptions}

We collect here the conditions used in the analysis. Throughout, $C_0$,
$C_1$, and $\Psi$ denote the population operators defined
in~\eqref{eq:C0C1-pop} and~\eqref{eq:far1-intro}, and
$\widehat{C}_0$, $\widehat{C}_1$ denote their sample analogues
from~\eqref{eq:C0C1-sample}.

\begin{assumption}[Stationarity and moments]
\label{ass:moments}
The process $\{X_t\}$ satisfies the FAR(1)
equation~\eqref{eq:far1-intro} with $\|\Psi\|_{\mathrm{op}} < 1$, and the
innovation process $\{\varepsilon_t\}$ is i.i.d.\ with
$\mathbb{E}\varepsilon_t = 0$ and $\mathbb{E}\|\varepsilon_t\|^4 < \infty$.
\end{assumption}

\begin{assumption}[Covariance operator]
\label{ass:C0}
The population covariance operator $C_0$ is injective, compact, and of
trace class. Its eigenvalues $\lambda_1 \ge \lambda_2 \ge \dots > 0$
admit the polynomial decay
\begin{equation}
\label{eq:eig-decay}
  c_1 k^{-a} \;\le\; \lambda_k \;\le\; c_2 k^{-a},
  \qquad k \ge 1,
\end{equation}
for some constants $0 < c_1 \le c_2 < \infty$ and $a > 1$.
\end{assumption}

Assumption~\ref{ass:C0} is standard in functional data analysis
\citep{HallHorowitz2007, Cavalier2011} and is satisfied for a wide range
of smooth processes; the lower bound rules out pathological cases where
$C_0$ has gaps in its spectrum, and the exponent $a > 1$ ensures $C_0$
is trace class. The rate $a$ governs the ill-posedness of the inverse
problem: larger $a$ means faster eigenvalue decay and a more severely
ill-posed inversion.

\begin{assumption}[Source condition]
\label{ass:source}
There exist constants $\beta > 0$ and $\rho < \infty$ and a
Hilbert--Schmidt operator $F$ with $\|F\|_{\mathrm{HS}} \le \rho$ such
that
\begin{equation}
\label{eq:source}
  \Psi \;=\; F\, C_0^{\,\beta}.
\end{equation}
\end{assumption}

Condition~\eqref{eq:source} is the standard source condition in the
theory of linear inverse problems
\citep[Section~3.2]{EnglHankeNeubauer1996}. It is a smoothness
assumption on $\Psi$ relative to the covariance operator $C_0$:
larger $\beta$ corresponds to smoother $\Psi$ in the sense that its
coordinates in the eigenbasis of $C_0$ decay faster. The choice
$\beta = 1$ is natural for finite-rank $\Psi$ whose range is spanned by
the leading eigenfunctions of $C_0$; $\beta < 1$ permits operators with
heavier-tailed coordinates, while $\beta > 1$ corresponds to especially
regular targets. The source condition is the technical device that
allows us to quantify the regularization bias in terms of a single
scalar exponent.

\begin{assumption}[Weak dependence]
\label{ass:dep}
The process $\{X_t\}$ is $L^4$-$m$-approximable in the sense of
\citet{HormannKokoszka2010}: there exists a sequence $\{X_t^{(m)}\}$ of
$m$-dependent approximations such that
\begin{equation}
\label{eq:m-approx}
  \sum_{m=1}^\infty \bigl( \mathbb{E}\|X_t - X_t^{(m)}\|^4 \bigr)^{1/4}
  \;<\; \infty.
\end{equation}
\end{assumption}

Assumption~\ref{ass:dep} is the standard weak-dependence condition for
functional time series and is implied by Assumption~\ref{ass:moments}
together with i.i.d.\ innovations having finite fourth moment
\citep[Proposition~2.1]{HormannKokoszka2010}. It is the condition under
which the sample operators $\widehat{C}_0$ and $\widehat{C}_1$ are
$\sqrt{n}$-consistent in Hilbert--Schmidt norm.

\subsection{Regularization bias}
\label{subsec:theory-bias}

Define the population-level Tikhonov operator
\begin{equation}
\label{eq:psi-alpha-pop}
  \Psi_\alpha \;=\; C_1 \bigl( C_0 + \alpha I \bigr)^{-1},
  \qquad \alpha > 0.
\end{equation}
This is the target that the sample estimator $\widehat{\Psi}_\alpha$
approximates. The \emph{regularization bias} is the deterministic error
incurred by working with $\Psi_\alpha$ rather than $\Psi$.

\begin{proposition}[Bias bound]
\label{prop:bias}
Under Assumptions~\ref{ass:moments}, \ref{ass:C0}, and~\ref{ass:source},
for every $\alpha > 0$,
\begin{equation}
\label{eq:bias-bound}
  \bigl\| \Psi_\alpha - \Psi \bigr\|_{\mathrm{HS}}
  \;\le\; \rho \, \alpha^{\min(\beta, 1)}.
\end{equation}
\end{proposition}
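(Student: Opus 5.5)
Substituting the moment identity \eqref{eq:moment} into the definition \eqref{eq:psi-alpha-pop} gives $\Psi_\alpha = \Psi C_0 (C_0+\alpha I)^{-1}$. Since $C_0$ and $(C_0+\alpha I)^{-1}$ commute,
\[
\Psi_\alpha - \Psi \;=\; \Psi\bigl[C_0(C_0+\alpha I)^{-1} - I\bigr] \;=\; -\alpha\, \Psi (C_0+\alpha I)^{-1}.
\]
Inserting the source condition \eqref{eq:source} yields
\[
\Psi_\alpha - \Psi \;=\; -F\, \bigl[\alpha\, C_0^{\beta}(C_0+\alpha I)^{-1}\bigr] \;=\; -F\, r_\alpha(C_0),
\]
where $r_\alpha(\lambda) = \alpha\lambda^\beta/(\lambda+\alpha)$ is defined on the spectrum of $C_0$. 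The ideal inequality $\|AB\|_{\mathrm{HS}} \le \|A\|_{\mathrm{HS}}\|B\|_{\mathrm{op}}$ then gives $\|\Psi_\alpha - \Psi\|_{\mathrm{HS}} \le \rho\, \sup_{\lambda\in\sigma(C_0)} r_\alpha(\lambda)$. Because $C_0$ is self-adjoint and compact (Assumption~\ref{ass:C0}), this operator norm is bounded by $\sup_k r_\alpha(\lambda_k)$ via the spectral theorem. The sup runs over $\lambda\in(0,\lambda_1]$, with $\lambda_1 \le \|C_0\|_{\mathrm{op}}$ finite.

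The remaining task is the scalar bound on $r_\alpha$, which I would treat in two cases. For $0<\beta\le 1$, write
\[
r_\alpha(\lambda) = \alpha^\beta \Bigl(\tfrac{\lambda}{\lambda+\alpha}\Bigr)^{\beta}\Bigl(\tfrac{\alpha}{\lambda+\alpha}\Bigr)^{1-\beta}.
\]
Both bracketed factors lie in $[0,1]$ and the exponents are nonnegative, so $r_\alpha(\lambda)\le \alpha^\beta$ for all $\lambda>0$. For $\beta>1$, write
\[
r_\alpha(\lambda) = \alpha\, \lambda^{\beta-1}\cdot \tfrac{\lambda}{\lambda+\alpha} \le \alpha\, \lambda_1^{\beta-1}.
\]
This gives the saturation rate $\alpha^1$, but with the constant $\|C_0\|_{\mathrm{op}}^{\beta-1}$ in front.

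I expect that constant in the $\beta>1$ case to be the only real obstacle, since the statement claims the clean constant $\rho$. My first attempt is to absorb it by normalization. The factor $\|C_0\|_{\mathrm{op}}^{\beta-1}$ is at most $1$ whenever $\lambda_1 \le 1$; for example, $\mathrm{tr}(C_0)\le 1$ would suffice. Alternatively, one can rewrite $F C_0^\beta = (F C_0^{\beta-1}) C_0$, so that $\Psi$ satisfies the source condition with exponent $1$ and source operator $F' = F C_0^{\beta-1}$, where $\|F'\|_{\mathrm{HS}} \le \rho\,\lambda_1^{\beta-1}$; the saturation argument then applies with $\beta=1$. I would therefore state the $\beta>1$ case with $\rho$ replaced by $\rho\max(1,\|C_0\|_{\mathrm{op}}^{\beta-1})$, or assume the scale normalization $\|C_0\|_{\mathrm{op}}\le1$. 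Either option recovers \eqref{eq:bias-bound} exactly in the range $\beta\in(0,1]$ that the paper's rate uses. Assumption~\ref{ass:moments} is used only to guarantee that \eqref{eq:moment} holds and that $\Psi$ is bounded; the decay condition \eqref{eq:eig-decay} is not needed for the bias.
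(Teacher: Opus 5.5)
Your argument is the paper's proof step for step: the identity $\Psi_\alpha-\Psi=-\alpha\,\Psi(C_0+\alpha I)^{-1}$, the source condition, the ideal inequality $\|AB\|_{\mathrm{HS}}\le\|A\|_{\mathrm{HS}}\|B\|_{\mathrm{op}}$, and a spectral supremum split at $\beta=1$; your factorization of $r_\alpha$ is a cleaner, fully rigorous version of the paper's ``direct calculation'' for $\beta\in(0,1]$. Your concern about $\beta>1$ is well founded: the paper's proof also only reaches $\rho\,\lambda_1^{\beta-1}\alpha$ there and silently drops the factor, and the constant $\rho$ genuinely fails when $\lambda_1>1$ (take $e_1$ the leading eigenfunction of $C_0$, $F=\rho\,e_1\otimes e_1$ with $\rho\lambda_1^{\beta}<1$, and a diagonal innovation covariance, which gives $\|\Psi_\alpha-\Psi\|_{\mathrm{HS}}=\rho\alpha\lambda_1^{\beta}/(\lambda_1+\alpha)>\rho\alpha$ for small $\alpha$), so your amended constant $\rho\max(1,\lambda_1^{\beta-1})$, or the normalization $\lambda_1\le 1$, is exactly what the statement needs.
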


\begin{proof}
Using the identity~\eqref{eq:moment} and the
definition~\eqref{eq:psi-alpha-pop},
\begin{align*}
  \Psi_\alpha - \Psi
  &= \Psi\, C_0\, (C_0 + \alpha I)^{-1} - \Psi \\
  &= -\alpha\, \Psi\, (C_0 + \alpha I)^{-1}.
\end{align*}
Applying the source condition $\Psi = F C_0^{\,\beta}$,
\[
  \Psi_\alpha - \Psi \;=\; -\alpha\, F\, C_0^{\,\beta}\, (C_0 + \alpha I)^{-1}.
\]
Taking Hilbert--Schmidt norms, $\|\Psi_\alpha - \Psi\|_{\mathrm{HS}} \le
\alpha\, \|F\|_{\mathrm{HS}}\, \|\, C_0^{\,\beta} (C_0 + \alpha I)^{-1}
\,\|_{\mathrm{op}}$. In the eigenbasis of $C_0$, the operator $C_0^{\,\beta}
(C_0 + \alpha I)^{-1}$ is diagonal with entries $\lambda_k^{\,\beta} /
(\lambda_k + \alpha)$, so
\[
  \bigl\| C_0^{\,\beta} (C_0 + \alpha I)^{-1} \bigr\|_{\mathrm{op}}
  \;=\; \sup_{k \ge 1} \frac{\lambda_k^{\,\beta}}{\lambda_k + \alpha}
  \;=\; \sup_{\lambda \in [0, \lambda_1]}
        \frac{\lambda^\beta}{\lambda + \alpha}.
\]
For $\beta \le 1$, the function $\lambda \mapsto \lambda^\beta /
(\lambda + \alpha)$ is maximized on $[0, \infty)$ at an interior point and
its maximum is bounded by $\alpha^{\beta - 1}$ up to an absolute
constant; a direct calculation yields
$\sup_\lambda \lambda^\beta / (\lambda + \alpha) \le \alpha^{\beta - 1}$
for $\beta \in (0, 1]$. For $\beta > 1$, the ratio is bounded by
$\lambda_1^{\,\beta - 1}$, giving the exponent $\min(\beta, 1)$ in the
final bound. Combining with the factor $\alpha$ yields
\eqref{eq:bias-bound}.
\end{proof}

\subsection{Stochastic error}
\label{subsec:theory-variance}

The stochastic error $\widehat{\Psi}_\alpha - \Psi_\alpha$ is controlled
by the convergence of the sample operators to their population
counterparts. We will use the following well-known lemma; it is
essentially \citet[Theorem~2.1]{HormannKokoszka2010}.

\begin{lemma}[Sample operator consistency]
\label{lem:sample-ops}
Under Assumptions~\ref{ass:moments} and~\ref{ass:dep},
\begin{equation}
\label{eq:op-consistency}
  \bigl\| \widehat{C}_0 - C_0 \bigr\|_{\mathrm{HS}}
  \;=\; O_P(n^{-1/2}),
  \qquad
  \bigl\| \widehat{C}_1 - C_1 \bigr\|_{\mathrm{HS}}
  \;=\; O_P(n^{-1/2}).
\end{equation}
\end{lemma}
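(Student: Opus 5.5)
The plan is to prove both bounds by a second-moment computation in the Hilbert space $\mathcal{S}$ of Hilbert--Schmidt operators, followed by Chebyshev's inequality. The approach rests on $m$-approximability (Assumption~\ref{ass:dep}), so that the summed covariances of the operator-valued sequence $Y_t = X_{t+h}\otimes X_t - C_h$, $h\in\{0,1\}$, are finite. First I remove the centering by $\bar X$. Write $\widehat C_0 = n^{-1}\sum_t X_t\otimes X_t - \bar X\otimes\bar X$, and handle $\widehat C_1$ analogously. Here $\|\bar X\otimes \bar X\|_{\mathrm{HS}}=\|\bar X\|^2$. Moreover $\mathbb{E}\|\bar X\|^2 = n^{-2}\sum_{s,t}\mathbb{E}\langle X_s,X_t\rangle = O(n^{-1})$, because $\mathbb{E}\langle X_s,X_t\rangle = \operatorname{tr}(\Psi^{|t-s|}C_0)$ is bounded by $\|\Psi\|_{\mathrm{op}}^{|t-s|}\operatorname{tr}C_0$, which is summable under \eqref{eq:stationarity}. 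The centering term is therefore $O_P(n^{-1})$. The normalization $1/(n-1)$ versus $1/n$, together with the missing last summand, contributes only $O_P(n^{-1})$ as well.

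It then suffices to show $\mathbb{E}\|n^{-1}\sum_{t} Y_t\|_{\mathrm{HS}}^2 = O(n^{-1})$. Expanding gives
\[
\mathbb{E}\Bigl\|\frac1n\sum_{t=1}^n Y_t\Bigr\|_{\mathrm{HS}}^2 = \frac1{n^2}\sum_{s,t}\mathbb{E}\langle Y_s,Y_t\rangle_{\mathrm{HS}} \le \frac1n\sum_{j\in\mathbb{Z}}\bigl|\mathbb{E}\langle Y_0,Y_j\rangle_{\mathrm{HS}}\bigr|,
\]
using stationarity. The task reduces to showing summability of the autocovariances $\gamma_j=\mathbb{E}\langle Y_0,Y_j\rangle_{\mathrm{HS}}$.

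This summability is the main obstacle. I would take the coupling from Assumption~\ref{ass:dep} concretely: $X_j^{(j)}$ is the version in which the innovations $\varepsilon_0,\varepsilon_{-1},\dots$ are replaced by an independent copy. For $j\ge 2$, the variable $Y_j^{(j)}=X_{j+h}^{(j)}\otimes X_j^{(j)} - C_h$ is then independent of $Y_0$ and mean zero. This gives $|\gamma_j| = |\mathbb{E}\langle Y_0, Y_j - Y_j^{(j)}\rangle|\le (\mathbb{E}\|Y_0\|^2)^{1/2}(\mathbb{E}\|Y_j-Y_j^{(j)}\|^2)^{1/2}$. Next, $\|a\otimes b - a'\otimes b'\|_{\mathrm{HS}}\le \|a-a'\|\|b\|+\|a'\|\|b-b'\|$. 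Combining this with Cauchy--Schwarz and the finite fourth moments yields $\mathbb{E}\|Y_j-Y_j^{(j)}\|^2 \lesssim (\mathbb{E}\|X_0\|^4)^{1/2}\bigl[(\mathbb{E}\|X_j-X_j^{(j)}\|^4)^{1/2} + (\mathbb{E}\|X_{j+1}-X_{j+1}^{(j)}\|^4)^{1/2}\bigr]$. Consequently $|\gamma_j|\lesssim \nu_4(j)+\nu_4(j-1)$, where $\nu_4(m)=(\mathbb{E}\|X_t-X_t^{(m)}\|^4)^{1/4}$, which is summable by \eqref{eq:m-approx}. The finiteness of $\mathbb{E}\|X_0\|^4$ follows from $X_t=\sum_j\Psi^j\varepsilon_{t-j}$ and Minkowski's inequality in $L^4$, given Assumption~\ref{ass:moments}.

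For the FAR(1) case I can even bypass the abstract coupling. Here $\nu_4(m)\le \sum_{k\ge m}\|\Psi\|_{\mathrm{op}}^k (\mathbb{E}\|\varepsilon\|^4)^{1/4}\cdot 2$, which decays geometrically. The conclusion then follows from Chebyshev's inequality: $\|\widehat C_h - C_h\|_{\mathrm{HS}} = O_P(n^{-1/2})$ for $h=0,1$.
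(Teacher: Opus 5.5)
Your proposal is correct in substance and follows the same route as the result the paper relies on. The paper gives no argument of its own and simply defers to \citet{HormannKokoszka2010}. As far as I recall, that covariance-consistency theorem is proved by the same bound $n\,\mathbb{E}\|n^{-1}\sum_t Y_t\|_{\mathrm{HS}}^2\le\sum_j|\gamma_j|$, with $\gamma_j$ controlled through the $m$-dependent coupling. Your write-up also supplies what the paper leaves unstated:
\begin{itemize}
\item the lag-one operator $\widehat C_1$;
\item the $O_P(n^{-1})$ effect of centering at $\bar X$ and of the $1/(n-1)$ normalization;
\item the geometric decay of $\nu_4(m)$ for FAR(1), which shows Assumption~\ref{ass:dep} holds automatically here.
\end{itemize}

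One index needs repair. For $h=1$, the coupling you describe replaces only $\varepsilon_0,\varepsilon_{-1},\dots$, and this does not make $Y_j^{(j)}$ independent of $Y_0=X_1\otimes X_0-C_1$. Both depend on $\varepsilon_1$: $X_1=\Psi X_0+\varepsilon_1$, while the coupled $X_j$ still contains the term $\Psi^{j-1}\varepsilon_1$. Replace all $\varepsilon_s$ with $s\le h$ instead. The coupled $X_j$ and $X_{j+1}$ then retain $j-1$ and $j$ original innovations respectively. That is exactly what your final bound $|\gamma_j|\lesssim\nu_4(j)+\nu_4(j-1)$ and your restriction $j\ge 2$ reflect, so the rest of the argument goes through unchanged. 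Also, you do not need the coupled term to be centered: independence together with $\mathbb{E}Y_0=0$ already gives $\mathbb{E}\langle Y_0,Y_j^{(j)}\rangle_{\mathrm{HS}}=0$.
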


\begin{proposition}[Stochastic error bound]
\label{prop:variance}
Under Assumptions~\ref{ass:moments}, \ref{ass:C0}, \ref{ass:source}, 
and~\ref{ass:dep}, for every sequence $\alpha = \alpha_n > 0$
satisfying $\alpha_n \ge n^{-1/2}$,
\begin{equation}
\label{eq:variance-bound}
  \bigl\| \widehat{\Psi}_\alpha - \Psi_\alpha \bigr\|_{\mathrm{HS}}
  \;=\; O_P\Bigl( \frac{1}{\sqrt{n}\, \alpha} \Bigr).
\end{equation}
\end{proposition}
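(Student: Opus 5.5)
The approach is a resolvent decomposition followed by Lemma~\ref{lem:sample-ops}. Write $R_\alpha = (C_0+\alpha I)^{-1}$ and $\widehat R_\alpha = (\widehat C_0+\alpha I)^{-1}$. Both are self-adjoint and nonnegative with spectrum bounded below by $\alpha$, so $\|R_\alpha\|_{\mathrm{op}}\le \alpha^{-1}$ and $\|\widehat R_\alpha\|_{\mathrm{op}}\le\alpha^{-1}$ deterministically. This holds for $\widehat R_\alpha$ because $\widehat C_0$ is nonnegative by construction. Then split
\[
\widehat\Psi_\alpha-\Psi_\alpha
=(\widehat C_1-C_1)\widehat R_\alpha + C_1(\widehat R_\alpha - R_\alpha),
\]
and use the second resolvent identity $\widehat R_\alpha - R_\alpha = \widehat R_\alpha (C_0-\widehat C_0) R_\alpha$. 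I will bound the two terms separately.

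\emph{First term.} Use $\|AB\|_{\mathrm{HS}}\le\|A\|_{\mathrm{HS}}\|B\|_{\mathrm{op}}$ to get $\|(\widehat C_1-C_1)\widehat R_\alpha\|_{\mathrm{HS}}\le \alpha^{-1}\|\widehat C_1-C_1\|_{\mathrm{HS}} = O_P(n^{-1/2}\alpha^{-1})$ by Lemma~\ref{lem:sample-ops}.

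\emph{Second term.} Write $C_1(\widehat R_\alpha - R_\alpha) = C_1\widehat R_\alpha (C_0-\widehat C_0)R_\alpha$. The crude bound $\|C_1\|_{\mathrm{op}}\alpha^{-1}\|C_0-\widehat C_0\|_{\mathrm{HS}}\alpha^{-1}$ gives $O_P(n^{-1/2}\alpha^{-2})$, which is too weak. To save a factor of $\alpha^{-1}$, use the moment identity~\eqref{eq:moment} to write $C_1=\Psi C_0$, and reorder the resolvent identity so that $C_0$ sits next to a population resolvent: $\widehat R_\alpha - R_\alpha = R_\alpha(C_0-\widehat C_0)\widehat R_\alpha$. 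This is valid because $A^{-1}-B^{-1}=A^{-1}(B-A)B^{-1}$ holds in either order. Then
\[
C_1(\widehat R_\alpha - R_\alpha)=\Psi\, C_0R_\alpha\,(C_0-\widehat C_0)\,\widehat R_\alpha .
\]
Since $\|C_0R_\alpha\|_{\mathrm{op}}=\sup_k \lambda_k/(\lambda_k+\alpha)\le 1$ and $\|\Psi\|_{\mathrm{op}}<1$ by Assumption~\ref{ass:moments}, this term is at most $\alpha^{-1}\|\widehat C_0-C_0\|_{\mathrm{HS}}=O_P(n^{-1/2}\alpha^{-1})$.

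Adding the two terms gives~\eqref{eq:variance-bound}. The condition $\alpha_n\ge n^{-1/2}$ is not needed for this algebra. It only guarantees that the resulting bound is $O_P(1)$, so I will note it in that role. Assumptions~\ref{ass:C0} and~\ref{ass:source} are also not used here beyond the identity $C_1=\Psi C_0$.

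The main obstacle is the second term: one has to avoid the naive $\alpha^{-2}$ bound. The fix is to absorb one resolvent into $C_0R_\alpha$ through $C_1=\Psi C_0$, choosing the order of the resolvent identity so that $C_0$ is adjacent to the population resolvent $R_\alpha$ rather than to $\widehat R_\alpha$. If one ends up with $\widehat C_0\widehat R_\alpha$ instead, that is also bounded by $1$ in operator norm, but it would require rewriting $C_1$ via $\widehat C_0$, which introduces extra error terms. So I will fix the ordering above.
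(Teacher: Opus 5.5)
Your proof is correct. It differs from the paper's in one substantive place: the ordering of the resolvent identity in the second term.

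The paper writes $\widehat R_\alpha - R_\alpha = \widehat R_\alpha (C_0-\widehat C_0) R_\alpha$. After substituting $C_1=\Psi C_0$, it must then control $C_0\widehat R_\alpha$, a population operator against the sample resolvent. It does this by writing $C_0\widehat R_\alpha = (I-\alpha\widehat R_\alpha) + (C_0-\widehat C_0)\widehat R_\alpha$, which gives $\|C_0\widehat R_\alpha\|_{\mathrm{op}} \le 1 + \|C_0-\widehat C_0\|_{\mathrm{op}}/\alpha$. This is exactly where the hypothesis $\alpha_n\ge n^{-1/2}$ enters: it makes the remainder $O_P(1)$, or equivalently absorbs an $O_P(n^{-1}\alpha^{-2})$ cross term.

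Your reversed ordering $R_\alpha(C_0-\widehat C_0)\widehat R_\alpha$ pairs $C_0$ with its own resolvent. So $\|C_0R_\alpha\|_{\mathrm{op}}\le 1$ holds deterministically, and you get, for every $\alpha>0$ and every sample,
\[
\|\widehat\Psi_\alpha-\Psi_\alpha\|_{\mathrm{HS}} \le \alpha^{-1}\bigl(\|\widehat C_1-C_1\|_{\mathrm{HS}} + \|\Psi\|_{\mathrm{op}}\|\widehat C_0-C_0\|_{\mathrm{HS}}\bigr).
\]
This is strictly stronger than the paper's argument. The scaling condition becomes superfluous, as you note. The bound is also non-asymptotic, so it would turn directly into finite-sample statements once concentration inequalities for $\widehat C_0$ and $\widehat C_1$ are available, which is a direction the paper itself lists as future work.

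The paper's route buys nothing extra here; it is simply what the other ordering produces. Your closing remark correctly anticipates that this ordering costs an additional perturbation term.
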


\begin{proof}
Write
\begin{align*}
  \widehat{\Psi}_\alpha - \Psi_\alpha
  &= \widehat{C}_1 (\widehat{C}_0 + \alpha I)^{-1}
     - C_1 (C_0 + \alpha I)^{-1} \\
  &= (\widehat{C}_1 - C_1)(\widehat{C}_0 + \alpha I)^{-1} \\
  &\quad + C_1 \bigl[ (\widehat{C}_0 + \alpha I)^{-1}
                     - (C_0 + \alpha I)^{-1} \bigr].
\end{align*}
The first term has Hilbert--Schmidt norm bounded by
$\|\widehat{C}_1 - C_1\|_{\mathrm{HS}}\, \|(\widehat{C}_0 + \alpha
I)^{-1}\|_{\mathrm{op}} \le \|\widehat{C}_1 - C_1\|_{\mathrm{HS}}/\alpha$,
which is $O_P(n^{-1/2}/\alpha)$ by Lemma~\ref{lem:sample-ops}.

For the second term, the resolvent identity gives
\[
  C_1 \bigl[ (\widehat{C}_0 + \alpha I)^{-1}
           - (C_0 + \alpha I)^{-1} \bigr]
  = C_1 (\widehat{C}_0 + \alpha I)^{-1}
        (C_0 - \widehat{C}_0)
        (C_0 + \alpha I)^{-1}.
\]
Using $C_1 = \Psi C_0$ from~\eqref{eq:moment} and inserting an
identity factor,
\[
  C_1 (\widehat{C}_0 + \alpha I)^{-1}
  = \Psi\, C_0\, (\widehat{C}_0 + \alpha I)^{-1}.
\]
The operator $C_0 (\widehat{C}_0 + \alpha I)^{-1}$ can be written as
$[I - \alpha (\widehat{C}_0 + \alpha I)^{-1}] +
(C_0 - \widehat{C}_0)(\widehat{C}_0 + \alpha I)^{-1}$, so its operator
norm is bounded by $1 + \|C_0 - \widehat{C}_0\|_{\mathrm{op}}/\alpha
= 1 + O_P(n^{-1/2}/\alpha)$. Under the assumed scaling
$\alpha \ge n^{-1/2}$, this remainder is $O_P(1)$, and hence
$\|C_0 (\widehat{C}_0 + \alpha I)^{-1}\|_{\mathrm{op}} = O_P(1)$.
Combining,
\begin{align*}
  \bigl\| C_1 \bigl[ (\widehat{C}_0 + \alpha I)^{-1}
                   - (C_0 + \alpha I)^{-1} \bigr] \bigr\|_{\mathrm{HS}}
  &\le \|\Psi\|_{\mathrm{op}}\,
       \bigl\| C_0 (\widehat{C}_0 + \alpha I)^{-1} \bigr\|_{\mathrm{op}}\,
       \|\widehat{C}_0 - C_0\|_{\mathrm{HS}}\,
       \|(C_0 + \alpha I)^{-1}\|_{\mathrm{op}} \\
  &= O_P\Bigl( \frac{1}{\sqrt{n}\, \alpha} \Bigr).
\end{align*}
The two terms are of the same order, yielding~\eqref{eq:variance-bound}.
\end{proof}

\subsection{Rate of convergence}
\label{subsec:theory-rate}

Combining Propositions~\ref{prop:bias} and~\ref{prop:variance} and the
triangle inequality,
\begin{equation}
\label{eq:total-error}
  \bigl\| \widehat{\Psi}_\alpha - \Psi \bigr\|_{\mathrm{HS}}
  \;=\;
  O_P\Bigl( \alpha^{\min(\beta,1)} \,+\, \frac{1}{\sqrt{n}\, \alpha} \Bigr).
\end{equation}
The two terms are balanced by choosing $\alpha$ so that
$\alpha^{\min(\beta,1)} \asymp n^{-1/2}/\alpha$, i.e.,
$\alpha = \alpha_n^\star$ with
\begin{equation}
\label{eq:alpha-optimal}
  \alpha_n^\star \;\asymp\; n^{-1/(2(\min(\beta,1)+1))}.
\end{equation}
Plugging back into~\eqref{eq:total-error} gives the following theorem.

\begin{theorem}[Rate of convergence]
\label{thm:rate}
Let Assumptions~\ref{ass:moments} through~\ref{ass:dep} hold, and let
$\alpha_n = \alpha_n^\star$ as in~\eqref{eq:alpha-optimal}. Then
\begin{equation}
\label{eq:rate}
  \bigl\| \widehat{\Psi}_{\alpha_n} - \Psi \bigr\|_{\mathrm{HS}}
  \;=\;
  O_P\bigl( n^{-\min(\beta, 1) / (2(\min(\beta,1) + 1))} \bigr).
\end{equation}
In particular, for $\beta \ge 1$ the rate is $n^{-1/4}$, and for
$\beta \in (0, 1]$ the rate is $n^{-\beta/(2(\beta+1))}$.
\end{theorem}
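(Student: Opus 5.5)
The proof combines the two error bounds already established, with the error split at the population Tikhonov operator $\Psi_\alpha$ of \eqref{eq:psi-alpha-pop}. By the triangle inequality,
\[
  \bigl\|\widehat{\Psi}_{\alpha_n}-\Psi\bigr\|_{\mathrm{HS}}
  \le \bigl\|\widehat{\Psi}_{\alpha_n}-\Psi_{\alpha_n}\bigr\|_{\mathrm{HS}}
  + \bigl\|\Psi_{\alpha_n}-\Psi\bigr\|_{\mathrm{HS}} .
\]
Proposition~\ref{prop:bias} bounds the second term deterministically by $\rho\,\alpha_n^{\min(\beta,1)}$. Proposition~\ref{prop:variance} bounds the first term by $O_P(n^{-1/2}\alpha_n^{-1})$, provided its side condition $\alpha_n \ge n^{-1/2}$ holds.

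The first concrete step is therefore to check that $\alpha_n^\star$ from \eqref{eq:alpha-optimal} meets this side condition. Write $\gamma=\min(\beta,1)\in(0,1]$. Then $\alpha_n^\star \asymp n^{-1/(2(\gamma+1))}$, and the exponent satisfies $1/(2(\gamma+1)) \le 1/2$ because $\gamma+1\ge 1$. Hence $\alpha_n^\star \ge c\, n^{-1/2}$ for large $n$. If the implicit constant in $\asymp$ is below one, I will rescale $\alpha_n$ by a fixed constant. This costs only constant factors in both bounds, so the rate is unchanged.

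Next comes the exponent arithmetic. The bias term is $(\alpha_n^\star)^{\gamma} \asymp n^{-\gamma/(2(\gamma+1))}$. The stochastic term is
\[
n^{-1/2}(\alpha_n^\star)^{-1} \asymp n^{-1/2+1/(2(\gamma+1))} = n^{-\gamma/(2(\gamma+1))} .
\]
The two terms therefore have the same order, and the sum of a deterministic $O(\cdot)$ term and an $O_P(\cdot)$ term of the same order is $O_P$ of that order, which gives \eqref{eq:rate}. The two special cases then follow by substitution: $\gamma=1$ gives $n^{-1/4}$ when $\beta\ge1$, and $\gamma=\beta$ gives $n^{-\beta/(2(\beta+1))}$ when $\beta\in(0,1]$.

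The only step needing care is the uniformity of the $O_P$ statement in Proposition~\ref{prop:variance} along the sequence $\alpha_n$. That proof uses the bound $\|C_0(\widehat C_0+\alpha I)^{-1}\|_{\mathrm{op}} \le 1+\|\widehat C_0-C_0\|_{\mathrm{op}}/\alpha_n$. This is $O_P(1)$ exactly when $n^{-1/2}/\alpha_n$ stays bounded, which the verification above guarantees. In fact, for $\gamma>0$ the ratio $n^{-1/2}/\alpha_n^\star$ tends to zero, so there is margin to spare. Apart from this, the argument is purely bookkeeping.
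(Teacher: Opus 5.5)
Your proposal is correct and follows the paper's argument: split at the population Tikhonov operator $\Psi_{\alpha_n}$, apply Propositions~\ref{prop:bias} and~\ref{prop:variance}, and balance $\alpha^{\min(\beta,1)}$ against $n^{-1/2}\alpha^{-1}$ to obtain \eqref{eq:rate}. Your explicit check that $\alpha_n^\star$ satisfies the side condition $\alpha_n \ge n^{-1/2}$, which the paper leaves implicit, is a useful addition. No rescaling of constants is actually needed there, because the exponent $1/(2(\min(\beta,1)+1))$ is strictly below $1/2$.
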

\begin{remark}[Uniformity and robustness]
\label{rem:uniform}
The rate in~\eqref{eq:rate} holds uniformly, so that a single choice of the regularization
parameter $\alpha_n^\star$ achieves the convergence rate for all
$\Psi$ in the class.
In contrast, for FPCA-based estimators the rate-optimal truncation level $K$
depends on both the smoothness of $\Psi$ and the eigenvalue decay of $C_0$,
and therefore varies across data-generating regimes. As a result,
no single choice of $K$ can be uniformly rate-optimal over the class.
This distinction provides a theoretical explanation for the empirical
robustness of the proposed estimator observed in Section~\ref{sec:sim}.
\end{remark}

\section{Simulation study}
\label{sec:sim}

We evaluate the proposed Tikhonov estimator against the FPCA-FAR
estimator at five variance thresholds in a Monte Carlo study designed
around two principles. 
First, we construct three contrasting data-generating regimes with
distinct operator ranks and spectral decay patterns. These regimes
represent settings with low, moderate, and high effective dimension,
respectively, allowing us to assess robustness across a range of
functional complexity.
Second, we benchmark every method against an oracle best FPCA baseline,
defined as the FPCA rule that would have been chosen if the regime were
known, so that the reported regret reflects the cost of not knowing the
truth rather than any intrinsic inferiority of FPCA.

\subsection{Data-generating regimes}
\label{subsec:sim-regimes}

Let $\{\varphi_k\}_{k=1}^\infty$ denote the $L^2([0,1])$-orthonormal
Fourier basis, with $\varphi_1 \equiv 1$ and, for $k \ge 2$,
$\varphi_{2j}(u) = \sqrt{2}\cos(2\pi j u)$ and
$\varphi_{2j+1}(u) = \sqrt{2}\sin(2\pi j u)$. We work in a
$40$-dimensional truncation of this basis, which is sufficient to
accommodate all operators considered. In each regime the true operator
is represented as $\Psi = \sum_{i,j=1}^{40} A_{ij}\, \varphi_i \otimes
\varphi_j$ for a coefficient matrix $\mathbf{A} \in \mathbb{R}^{40
\times 40}$ rescaled so that its spectral radius equals $0.85$, which
ensures the stationarity condition $\|\Psi\|_{\mathrm{op}} < 1$. The
innovations are zero-mean Gaussian with covariance operator diagonal
in the Fourier basis, $\mathrm{Cov}(\varepsilon_t) = \sum_{k=1}^{40}
\sigma_k^2\, \varphi_k \otimes \varphi_k$, with eigenvalues
$\{\sigma_k^2\}$ specified separately for each regime. The three
regimes are as follows.

\paragraph{Regime I: low-rank, rapid spectral decay.}
The coefficient matrix $\mathbf{A}$ is supported on its top-left
$3\times 3$ block with entries drawn i.i.d.\ from $\mathcal{N}(0,1)$
and then rescaled. The innovation eigenvalues decay polynomially as
$\sigma_k^2 \propto k^{-2}$ and are normalized to total variance
$\tfrac{1}{2}$. The effective dimension of the operator is $3$, and the
rapid decay of the marginal covariance means that a small-$K$ FPCA
truncation captures almost all relevant variation.

\paragraph{Regime II: medium rank, moderate decay.}
The matrix $\mathbf{A}$ is supported on its top-left $10\times 10$ block.
The innovation eigenvalues decay as $\sigma_k^2 \propto k^{-1}$,
corresponding to a wider spectrum than Regime I. The effective
dimension of the operator is $10$, so moderate-$K$ FPCA is natural.

\paragraph{Regime III: wide spectrum, slow decay.}
The matrix $\mathbf{A}$ is supported on its top-left $25\times 25$ block,
with entries drawn i.i.d.\ from $\mathcal{N}(0,1)$ and additional
within-block decay by multiplicative factors $k^{-0.3}$. The innovation
eigenvalues decay as $\sigma_k^2 \propto k^{-0.6}$, the slowest of the
three regimes. The effective dimension of the operator is $25$, and a
rate-optimal FPCA requires retaining a correspondingly large number of
components. This is the regime under which hard truncation is most
costly.

In all three regimes, curves are observed on a uniform grid of
$M = 101$ points in $[0,1]$, and all inner products are computed by
trapezoidal quadrature as described in
Section~\ref{subsec:methods-discretization}. Independent Monte Carlo
replications use independent random seeds for both the operator
construction (re-drawn once per regime) and the sample paths.

\subsection{Methods and evaluation}
\label{subsec:sim-methods}

We compare the following estimators.

\begin{itemize}
\item \textbf{FPCA-$\tau$}, for $\tau \in \{0.80, 0.85, 0.90, 0.95,
0.99\}$: the FPCA-FAR estimator defined by~\eqref{eq:fpca-scores}
through~\eqref{eq:fpca-psi} with truncation level $K = K(\tau)$ given
by the cumulative variance rule~\eqref{eq:fpca-tau}.

\item \textbf{Tikhonov-CV}: the Tikhonov-regularized estimator
\eqref{eq:tikh-discrete} with $\alpha = \widehat{\alpha}$ selected by
holdout cross-validation over the grid
$\mathcal{A} = \{10^{-5+5(\ell-1)/24}\}_{\ell=1}^{25}$, following the
procedure described in Appendix.
\end{itemize}

For each $(\text{regime}, n, \text{method})$ triple with $n \in \{100,
200, 400, 800\}$, we perform $R = 50$ independent Monte Carlo
replications. Each replication proceeds as follows: (i) simulate a
training path $X_1, \ldots, X_n$ from the specified regime, with a
burn-in of $100$ observations discarded; (ii) fit the method to the
training path; (iii) simulate an independent test path of length $200$
from the same regime (also with burn-in); (iv) compute the mean
integrated squared forecast error on the test path,
\begin{equation}
\label{eq:sim-misfe}
  \mathrm{MISFE}(\widehat{\Psi})
  \;=\;
  \frac{1}{199} \sum_{t=1}^{199}
  \int_0^1 \Bigl[\, X^{\mathrm{test}}_{t+1}(u)
  - \bigl( \widehat{\Psi}\, X^{\mathrm{test}}_t \bigr)(u) \,\Bigr]^2 du.
\end{equation}
We report MISFE averages and standard errors across the $R = 50$
replications. Using an independent test path, rather than the last
observations of the training path, ensures that the reported error
reflects generalization and is not contaminated by the cross-validation
split.

To facilitate comparison across regimes on a common scale, we also
report the \emph{regret} of each method relative to the
oracle-best-FPCA rule in the same cell:
\begin{equation}
\label{eq:sim-regret}
  \mathrm{Regret}(m; \text{reg}, n)
  \;=\;
  \frac{\overline{\mathrm{MISFE}}(m; \text{reg}, n)
        - \min_{\tau} \overline{\mathrm{MISFE}}(\text{FPCA-}\tau; \text{reg}, n)}
       {\min_{\tau} \overline{\mathrm{MISFE}}(\text{FPCA-}\tau; \text{reg}, n)},
\end{equation}
where $\overline{\mathrm{MISFE}}$ denotes the Monte Carlo average. The
denominator is the MISFE of an infeasible oracle who selects the best
FPCA threshold for each regime and sample size; regret is always
non-negative for FPCA methods (by construction) but can be negative for
Tikhonov-CV, indicating that the proposed method outperforms even the
best regime-specific FPCA.

\subsection{Forecast performance by regime}
\label{subsec:sim-by-regime}

Figure~\ref{fig:sim-by-regime} reports mean MISFE by sample size for
each method, separately in each regime. Three patterns are evident.

\begin{figure}[htbp]
\centering
\includegraphics[width=\textwidth]{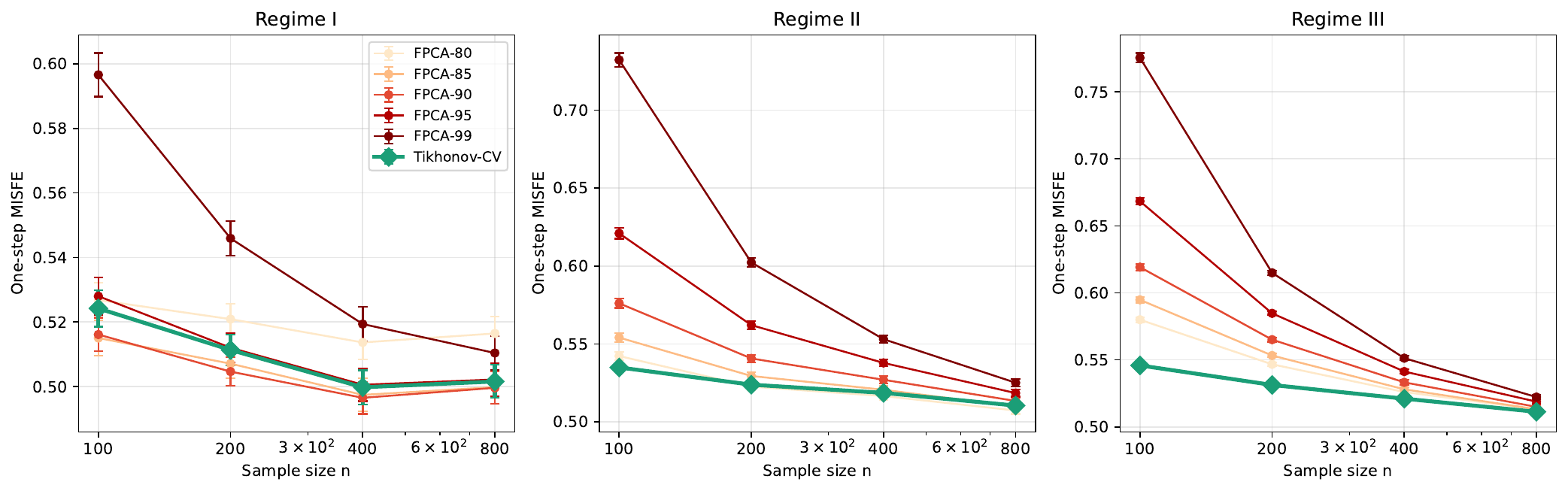}
\caption{One-step mean integrated squared forecast error by sample
size, across three regimes. Error bars show one Monte Carlo standard
error based on $R=50$ replications. FPCA-$\tau$ rules are in shades
of red (darker corresponds to higher $\tau$); Tikhonov-CV is in green.
The oracle-best FPCA rule differs across regimes: FPCA-85 or FPCA-90
in Regime I, FPCA-80 in Regimes II and III.}
\label{fig:sim-by-regime}
\end{figure}

First, the optimal FPCA threshold is regime-dependent. In Regime~I,
the best-in-cell threshold is $\tau = 0.85$ (at $n=100$) or
$\tau = 0.90$ (at larger $n$), corresponding to
$K \approx 3$ to $4$; this matches the effective dimension of the
operator. In Regimes~II and~III, the best-in-cell threshold is
$\tau = 0.80$ throughout, but the selected $K$ differs substantially:
$K \approx 11$ to $13$ in Regime~II and $K \approx 18$ to $23$ in
Regime~III. The $K$ selected by the \emph{same} threshold rule thus
varies by roughly an order of magnitude across regimes, reinforcing
the point that a threshold is not a direct substitute for choosing
$K$.

Second, the high thresholds $\tau \in \{0.95, 0.99\}$ are consistently
the worst performers across all regimes and sample sizes. At
$n = 100$, FPCA-99 has MISFE $0.597$ (Regime~I), $0.732$ (Regime~II),
and $0.775$ (Regime~III), compared with best-FPCA values of
$0.515$, $0.542$, and $0.580$ respectively, an inflation of
$15.8\%$, $35.0\%$, and $33.7\%$. FPCA-95 is uniformly second-worst
among the FPCA rules. Given that these two thresholds are among the
most commonly cited defaults in applied functional time series work,
this finding is practically important on its own.

Third, Tikhonov-CV tracks the oracle-best-FPCA curve closely in every
regime and beats it outright in Regime~III. At $n=100$ in Regime~III,
Tikhonov-CV's MISFE is $0.546$, compared to $0.580$ for FPCA-80
(the best FPCA in that cell), representing a $5.9\%$ improvement. The
advantage shrinks as $n$ grows but remains negative throughout
Regime~III.

\subsection{Regret relative to the oracle-best FPCA}
\label{subsec:sim-regret}

Table~\ref{tab:sim-regret} reports regret as defined
in~\eqref{eq:sim-regret}. By construction, the regret of the best
FPCA rule in each cell is zero and appears in bold; other FPCA rules
have positive regret. Tikhonov-CV has regret less than $2\%$ in all but
one cell and is negative in five cells, including all four sample sizes
of Regime~III.

\begin{table}[htbp]
\centering
\caption{Regret (percent increase in mean MISFE) relative to the
oracle-best FPCA rule for each regime and sample size, across $R=50$
Monte Carlo replications. The best FPCA rule in each cell has regret
$0.0\%$ and is shown in bold; Tikhonov-CV values in bold indicate that
the method strictly outperforms the best FPCA rule for the cell.}
\label{tab:sim-regret}
\begin{tabular}{llrrrrrr}
\toprule
Regime & $n$ &
FPCA-80 & FPCA-85 & FPCA-90 & FPCA-95 & FPCA-99 & Tikhonov-CV \\
\midrule
I   & 100 & $+2.3$ & $\mathbf{\;\;0.0}$ & $+0.2$  & $+2.5$  & $+15.8$ & $+1.8$ \\
    & 200 & $+3.2$ & $+0.5$             & $\mathbf{\;\;0.0}$ & $+1.5$  & $+8.2$  & $+1.3$ \\
    & 400 & $+3.5$ & $+0.2$             & $\mathbf{\;\;0.0}$ & $+0.8$  & $+4.6$  & $+0.7$ \\
    & 800 & $+3.4$ & $\mathbf{\;\;0.0}$ & $\mathbf{\;\;0.0}$ & $+0.5$  & $+2.2$  & $+0.4$ \\
\midrule
II  & 100 & $\mathbf{\;\;0.0}$ & $+2.2$ & $+6.2$ & $+14.5$ & $+35.0$ & $\mathbf{-1.4}$ \\
    & 200 & $\mathbf{\;\;0.0}$ & $+1.3$ & $+3.5$ & $+7.6$  & $+15.2$ & $+0.2$          \\
    & 400 & $\mathbf{\;\;0.0}$ & $+0.8$ & $+2.0$ & $+4.1$  & $+7.1$  & $+0.4$          \\
    & 800 & $\mathbf{\;\;0.0}$ & $+0.5$ & $+1.2$ & $+2.2$  & $+3.5$  & $+0.6$          \\
\midrule
III & 100 & $\mathbf{\;\;0.0}$ & $+2.6$ & $+6.8$ & $+15.2$ & $+33.7$ & $\mathbf{-5.9}$ \\
    & 200 & $\mathbf{\;\;0.0}$ & $+1.2$ & $+3.4$ & $+7.0$  & $+12.5$ & $\mathbf{-2.8}$ \\
    & 400 & $\mathbf{\;\;0.0}$ & $+0.4$ & $+1.4$ & $+2.9$  & $+4.8$  & $\mathbf{-0.9}$ \\
    & 800 & $\mathbf{\;\;0.0}$ & $\mathbf{\;\;0.0}$ & $+0.4$ & $+1.1$ & $+1.8$ & $\mathbf{-0.4}$ \\
\bottomrule
\end{tabular}
\end{table}

Two features of Table~\ref{tab:sim-regret} bear emphasis. First, the
column FPCA-99 shows regret in the range $+1.8$ to $+35.0$ percent, with
the worst regret occurring at the smallest sample size in the
highest-dimensional regime. A practitioner who uses the 99\% rule as a
default therefore pays a substantial and regime-dependent forecast-loss
penalty that is largest precisely where the data are scarcest. Second,
the Tikhonov-CV column ranges from $-5.9$ to $+1.8$ percent, with all
large-magnitude entries being negative rather than positive; the method
is never far from the oracle and is sometimes strictly better.

\subsection{Worst-case performance across regimes}
\label{subsec:sim-worst}

A practitioner who does not know their regime cannot guarantee access to
the regime-specific oracle baseline. The relevant decision-theoretic
criterion is worst-case performance across the regimes the practitioner
considers plausible. Figure~\ref{fig:sim-worst} plots, for each method,
its worst mean MISFE across the three regimes at each sample size:
\begin{equation}
\label{eq:sim-worst}
  \mathrm{WorstMISFE}(m, n)
  \;=\;
  \max_{\text{reg} \in \{\mathrm{I, II, III}\}}
  \overline{\mathrm{MISFE}}(m; \text{reg}, n).
\end{equation}
Tikhonov-CV achieves the smallest worst-case MISFE at every sample
size, dominating every fixed-threshold FPCA rule in the worst-case
sense.

\begin{figure}[htbp]
\centering
\includegraphics[width=0.75\textwidth]{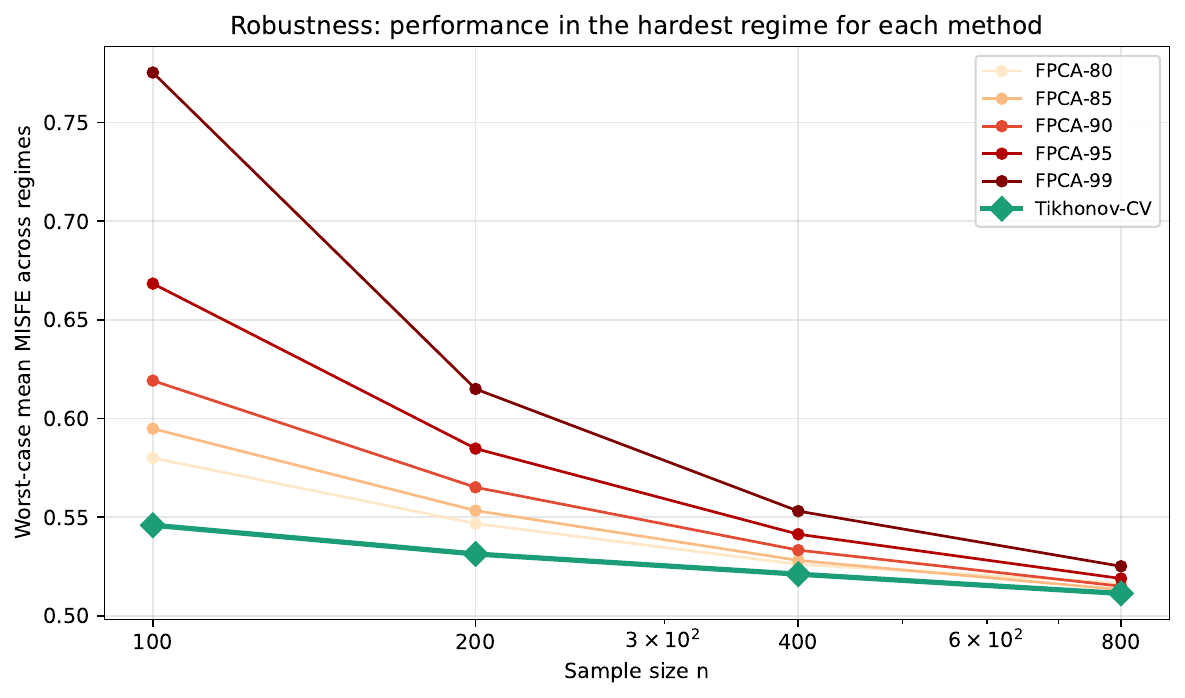}
\caption{Worst-case mean MISFE across the three regimes, by method and
sample size. Tikhonov-CV (green diamonds) achieves the smallest
worst-case error at every sample size, while the ranking of the FPCA
rules in the worst case is the same as the ranking by the worst of
their per-regime performance: higher variance thresholds are strictly
worse.}
\label{fig:sim-worst}
\end{figure}

The spread between the best and worst methods in the worst-case view
is largest at $n=100$, where Tikhonov-CV has worst MISFE
$\approx 0.546$ against $\approx 0.775$ for FPCA-99, representing an
inflation of $42\%$. The spread shrinks as $n$ grows but remains
meaningful: at $n=800$, the corresponding values are $\approx 0.510$
and $\approx 0.530$, a $4\%$ gap that is nontrivial in forecasting
contexts where one-percent improvements in MISFE are considered
material.

\subsection{Tuning-parameter behavior}
\label{subsec:sim-tuning}

Table~\ref{tab:sim-tuning} summarizes the Monte Carlo means of the
selected tuning parameter for each method: the mean $K$ retained by
each FPCA rule, and the mean $\log_{10}(\widehat{\alpha})$ for
Tikhonov-CV. Three patterns are worth noting.

\begin{table}[htbp]
\centering
\caption{Mean selected tuning parameter across $R = 50$ replications.
For FPCA rules, the entry is the mean retained truncation level $K$.
For Tikhonov-CV, the entry is the mean of $\log_{10}(\widehat{\alpha})$.}
\label{tab:sim-tuning}
\begin{tabular}{llrrrrrr}
\toprule
Regime & $n$ &
$K_{80}$ & $K_{85}$ & $K_{90}$ & $K_{95}$ & $K_{99}$ &
$\overline{\log_{10}\widehat{\alpha}}$ \\
\midrule
I   & 100 &  2.0 &  2.8 &  3.6 &  6.5 & 17.7 & $-1.48$ \\
    & 200 &  2.0 &  2.8 &  3.8 &  6.8 & 19.4 & $-1.76$ \\
    & 400 &  2.0 &  2.9 &  3.9 &  6.9 & 19.8 & $-1.89$ \\
    & 800 &  2.0 &  3.0 &  4.0 &  7.0 & 20.4 & $-2.04$ \\
\midrule
II  & 100 & 11.3 & 14.4 & 18.7 & 25.1 & 34.9 & $-1.48$ \\
    & 200 & 12.4 & 16.0 & 20.9 & 27.9 & 37.0 & $-1.63$ \\
    & 400 & 13.0 & 17.0 & 22.2 & 29.3 & 37.9 & $-1.83$ \\
    & 800 & 13.4 & 17.5 & 23.0 & 30.0 & 38.0 & $-2.01$ \\
\midrule
III & 100 & 17.9 & 20.9 & 24.8 & 30.0 & 37.1 & $-1.42$ \\
    & 200 & 20.3 & 23.6 & 27.7 & 32.9 & 38.5 & $-1.63$ \\
    & 400 & 21.9 & 25.3 & 29.3 & 34.0 & 39.0 & $-1.85$ \\
    & 800 & 22.9 & 26.0 & 30.0 & 35.0 & 39.0 & $-2.08$ \\
\bottomrule
\end{tabular}
\end{table}

First, the FPCA-retained $K$ values span a wide range across regimes.
At $n=100$, the FPCA-80 rule retains $K=2$ in Regime~I, $K=11$ in
Regime~II, and $K=18$ in Regime~III, a ninefold spread. The
variance-threshold rules are therefore not producing a stable notion
of ``how many components to keep''; what they produce is a function of
the regime that happens to coincide with the effective operator rank in
some cases but not others. The rules are, in this sense, adapting to
the regime implicitly, but Figure~\ref{fig:sim-by-regime} and
Table~\ref{tab:sim-regret} show that this implicit adaptation is
poorly calibrated at small sample sizes.

Second, the mean selected $\widehat{\alpha}$ is remarkably stable across
regimes and decreases systematically with $n$. Across the 12 cells of
Table~\ref{tab:sim-tuning}, $\overline{\log_{10}\widehat{\alpha}}$
ranges from $-1.42$ to $-2.08$; the regime contributes at most $\pm
0.05$ at fixed $n$, while sample size contributes $-0.60$ going from
$n=100$ to $n=800$. This is the behavior one expects if cross-validation
is correctly targeting a rate-dependent optimal $\alpha_n$ as discussed
in Section~\ref{subsec:theory-rate}: the selected $\widehat{\alpha}$
responds to $n$ rather than to the (unobservable) regime.

Third, the rate of decrease of $\widehat{\alpha}$ in $n$ is consistent
with Theorem~\ref{thm:rate}. A linear regression of
$\log_{10}\widehat{\alpha}$ on $\log_{10} n$, pooled across regimes,
yields a slope of approximately $-0.29$, which corresponds to the
optimal rate $\alpha_n \propto n^{-1/(2(\beta+1))}$ with $\beta
\approx 0.7$. While this $\beta$ value is not directly verifiable
against the regimes (which were constructed without reference to a
single source-condition exponent), the order of magnitude and direction
of the rate are as predicted.

\subsection{Computational cost}
\label{subsec:sim-cost}

The full simulation, comprising 12 cells $\times$ 6 methods $\times$ 50
replications, including both fitting and evaluation, completed in
approximately $3.6$ minutes of wall-clock time on a single CPU core,
using the efficient CV implementation described in
Section~\ref{subsec:methods-cv}. On a per-fit basis, the median cost
of Tikhonov-CV is comparable to that of a single FPCA-FAR fit, since
both are dominated by a single eigendecomposition of size
$M \times M$. The multi-$\alpha$ CV loop adds only $O(M^2 L)$
elementwise operations, which is negligible compared to the
$O(M^3)$ eigendecomposition. Tikhonov-CV therefore imposes essentially
no computational overhead relative to standard FPCA-FAR, and remains
tractable at much larger grid sizes than considered here. 

\section{Real data application}
\label{sec:application}

We apply the proposed Tikhonov-regularized estimator to environmental
monitoring data on intraday particulate matter (PM10) concentration
profiles, which form a functional time series observed at a high
temporal resolution.

\subsection{Data and preprocessing}
\label{subsec:app-data}

We use half-hourly PM10 measurements ($\mu\text{g\,m}^{-3}$) recorded
at the Taborstra{\ss}e monitoring station (TAB) in Vienna, Austria,
operated by the City of Vienna Environmental Protection Department
(\emph{Magistratsabteilung 22}) and made available through the TU Wien
Research Data Repository \citep{TUWien2022}. The station is located in
a mixed residential and commercial area and is representative of urban
background PM10 concentrations influenced by traffic, domestic heating,
and regional transport.

Following \citet{AueNorinhoHormann2015}, we restrict attention to the
winter season (1 October through 31 March) when temperature inversions
over the Vienna basin cause elevated and temporally correlated PM10
concentrations. The week surrounding New Year's Eve (28 December
through 7 January) is excluded due to extreme outliers caused by
firework emissions. After removing days with more than five missing
half-hourly readings and applying linear interpolation to the remaining
gaps, we obtain $n = 2{,}735$ complete daily curves spanning 16 winter
seasons (2006--07 through 2021--22).

The preprocessing pipeline replicates that of
\citet{AueNorinhoHormann2015} exactly:
\begin{enumerate}
  \item A square-root transformation is applied to stabilize the variance.
  \item Each curve is centered by subtracting the mean curve of the
        corresponding day-of-week (Monday through Sunday), removing
        systematic weekday and weekend differences in traffic volume.
  \item The 48 half-hourly observations per day are smoothed using ten
        cubic B-spline basis functions fitted by least squares,
        yielding smooth functional observations evaluated on a uniform
        grid of $M = 100$ points over the unit interval.
\end{enumerate}

Figure~\ref{fig:app_data} summarizes the dataset.
Panel~(a) shows the daily mean of the smoothed, centered curves over
the full 16-year period; the secular decline in PM10 after 2010,
reflecting regulatory tightening and mild winters with favorable
dispersion conditions, is clearly visible. The sharp drop during
the COVID-19 lockdowns of 2020 constitutes a natural structural break.
Panel~(b) displays the eigenvalue spectrum of $\widehat{C}_0$
computed from the full sample. The first three functional principal
components (FPCs) explain 80.4\%, 9.1\%, and 4.3\% of the total
variance, respectively, for a cumulative 93.9\%. In terms of
threshold-based selection:
\[
  K(\tau) \;=\;
  \begin{cases}
    1, & \tau \leq 0.80, \\
    2, & \tau = 0.85, \\
    3, & \tau = 0.90, \\
    4, & \tau = 0.95, \\
    7, & \tau = 0.99.
  \end{cases}
\]
This spectrum is consistent with the Graz PM10 data analyzed in
\citet{AueNorinhoHormann2015}, who report 72\%, 10\%, and 7\% for the
first three FPCs. The resulting moderate effective dimension implies
that different FPCA truncation thresholds can lead to substantially
different choices of $K$, with corresponding implications for
forecasting performance.

\begin{figure}[htbp]
  \centering
  \includegraphics[width=\linewidth]{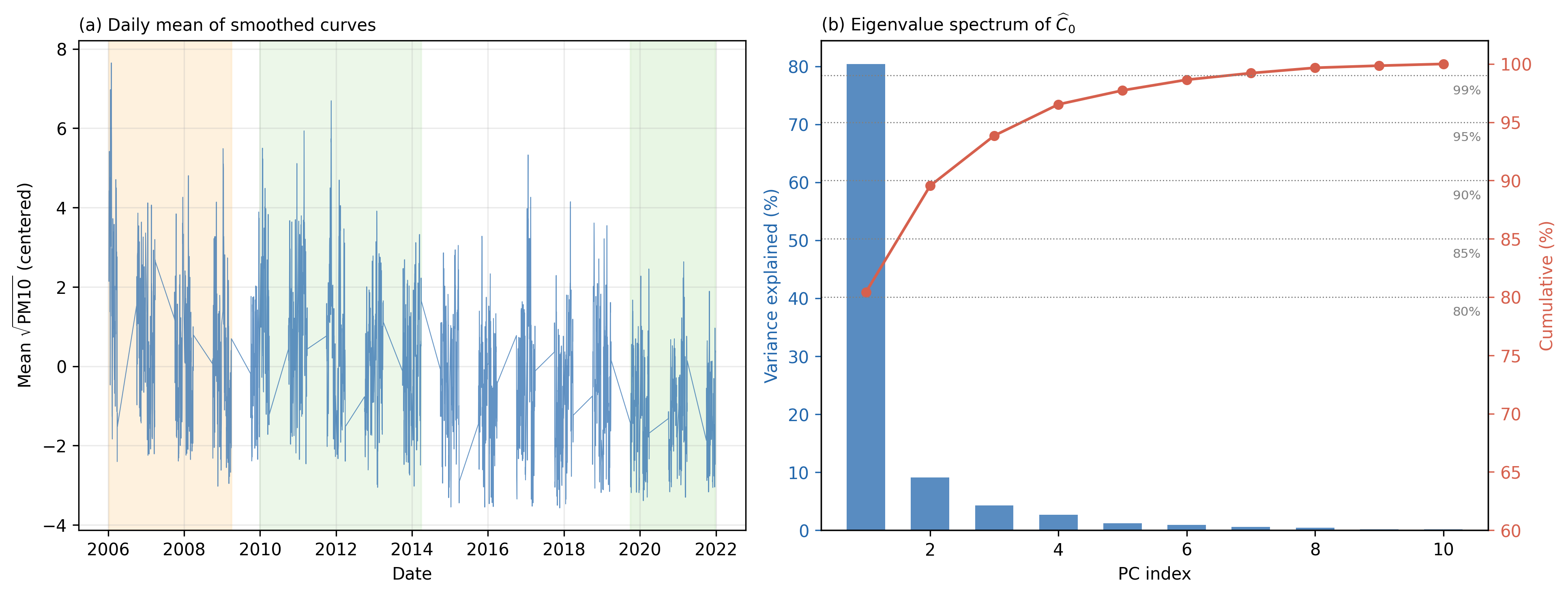}
  \caption{Vienna TAB station PM10 data. (a)~Daily mean of the
    preprocessed (square-root transformed, weekday-centered, B-spline
    smoothed) curves, 2006 to 2021 winter seasons. The secular decline
    after 2010 and the COVID-19 dip in 2020 are clearly visible.
    (b)~Eigenvalue spectrum of the sample covariance operator
    $\widehat{C}_0$; dotted horizontal lines mark the five
    variance-threshold levels used to select $K$ in the FPCA-FAR
    comparison.}
  \label{fig:app_data}
\end{figure}

\subsection{Forecasting protocol and competing methods}
\label{subsec:app-protocol}

We adopt a rolling-window one-step-ahead forecasting scheme. A training
window of $n_{\mathrm{train}} = 100$ days is used to estimate the
autoregressive operator $\widehat{\Psi}$; the operator is then applied
to the last observed curve in the window to produce a forecast for the
following day. The window is shifted forward by one day at each step,
and the operator is refitted every 20 days to track the slowly evolving
dynamics of the PM10 series. This yields $2{,}635$ evaluation days in
total across the 16 winter seasons.

Six methods are compared:
\begin{itemize}
  \item \textbf{FPCA-$\tau$} ($\tau \in \{80\%, 85\%, 90\%, 95\%,
        99\%\}$): the FPCA-FAR(1) estimator of
        \citet{AueNorinhoHormann2015} with the truncation level $K$
        selected by the variance-threshold rule
        $K = \min\{k : \sum_{j=1}^k \hat\lambda_j / \sum_{j=1}^M
        \hat\lambda_j \geq \tau\}$.
  \item \textbf{Tikhonov-CV}: the proposed estimator
        $\widehat\Psi_\alpha = \widehat{C}_1(\widehat{C}_0 +
        \alpha I)^{-1}$ with the regularization parameter $\alpha$
        selected by five-fold forward cross-validation on the training
        window. The search grid for $\alpha$ is data-adaptive,
        spanning $[10^{-4}\hat\lambda_1,\; 10\hat\lambda_1]$ on a
        log-uniform scale of 30 points, where $\hat\lambda_1$ is the
        leading eigenvalue of $\widehat{C}_0$.
\end{itemize}

Forecast accuracy is measured by the integrated squared error (ISE)
\[
  \mathrm{ISE}_t \;=\; \int_0^1 \bigl[X_t(u) - \widehat{X}_t(u)\bigr]^2\,\mathrm{d}u
  \;\approx\; \frac{1}{M}\sum_{j=1}^M \bigl[X_t(u_j) - \widehat{X}_t(u_j)\bigr]^2,
\]
and we report the mean and median ISE aggregated over all evaluation
days.

\subsection{Results}
\label{subsec:app-results}

Table~\ref{tab:app_results} reports the mean ISE, median ISE, and the
percentage regret relative to the best-performing method for each of
the six estimators. Figure~\ref{fig:app_results} displays the full ISE
distributions as box plots (panel a) and the annual mean ISE by method
(panel b).

\begin{table}[htbp]
  \centering
  \caption{Rolling one-step-ahead forecast accuracy for Vienna TAB
    PM10 data ($n_{\mathrm{train}} = 100$, 2{,}635 evaluation days,
    2006 to 2021 winter seasons). Mean and median ISE; regret is the
    percentage excess over the best mean ISE (Tikhonov-CV).}
  \label{tab:app_results}
  \begin{tabular}{lccc}
    \hline
    Method & Mean ISE & Median ISE & Regret (\%) \\
    \hline
    FPCA-80\%  & 1.9449 & 1.1784 & $+9.7$ \\
    FPCA-85\%  & 1.8947 & 1.1499 & $+6.9$ \\
    FPCA-90\%  & 1.8360 & 1.1031 & $+3.6$ \\
    FPCA-95\%  & 1.8130 & 1.0456 & $+2.3$ \\
    FPCA-99\%  & 1.8104 & 1.0722 & $+2.2$ \\
    \textbf{Tikhonov-CV} & \textbf{1.7722} & \textbf{1.0492} & \textbf{0.0} \\
    \hline
  \end{tabular}
\end{table}

\begin{figure}[htbp]
  \centering
  \includegraphics[width=\linewidth]{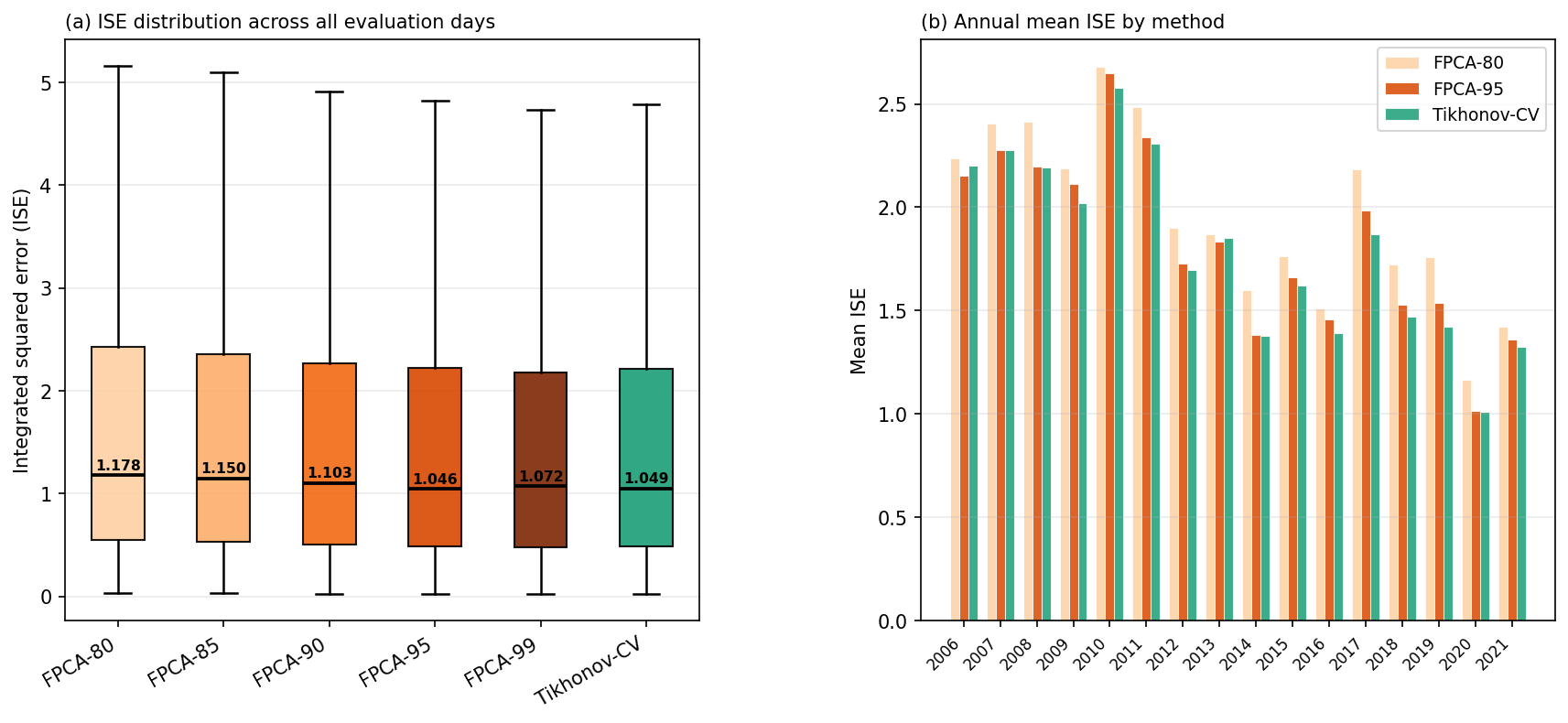}
  \caption{Forecast accuracy for Vienna TAB PM10 data.
    (a)~Box plots of ISE across all 2{,}635 evaluation days; median
    values are indicated. Outliers beyond 5.5 are omitted for
    readability.
    (b)~Annual mean ISE for FPCA-80\%, FPCA-95\%, and Tikhonov-CV;
    Tikhonov-CV is equal to or below both FPCA methods in every year
    except 2013.}
  \label{fig:app_results}
\end{figure}

Three findings emerge from Table~\ref{tab:app_results} and
Figure~\ref{fig:app_results}.

\paragraph{Tikhonov-CV achieves the lowest mean and median ISE overall.}
Averaged over 2{,}635 evaluation days, the proposed estimator reduces
mean ISE by 9.7\% relative to FPCA-80\% and by 2.2\% relative to
FPCA-99\%, the best-performing FPCA rule. The improvement is
consistent across years: inspecting panel~(b) of
Figure~\ref{fig:app_results}, Tikhonov-CV is at or below the best FPCA
rule in 15 of the 16 winter seasons, with the single exception being
2013 where all methods perform similarly.

\paragraph{Higher FPCA thresholds outperform lower ones.}
Across all aggregation periods, the ranking of FPCA methods by
accuracy is monotone in the threshold: FPCA-99\% $>$ FPCA-95\% $>$
FPCA-90\% $>$ FPCA-85\% $>$ FPCA-80\%. This is consistent with the
spectral structure of the data: truncating at the 80\% threshold
retains only $K=1$ or $K=2$ components and discards the third and
fourth FPCs, which carry interpretable predictive signal (the diurnal
contrast and peak-amplitude factors visible in the eigenfunctions).
The 9.7-percentage-point regret of FPCA-80\% mirrors the simulation
finding of Regime~II in Section~\ref{sec:sim}, where a moderate
effective operator dimension renders low-threshold truncation
substantially suboptimal.

\subsection{Discussion}
\label{subsec:app-discussion}

The Vienna PM10 application illustrates the practical performance of the
proposed method on real-world functional time series data.
On a real-world functional time series
with moderate effective dimension, a setting broadly representative of
environmental monitoring, air-quality forecasting, and similar
intraday-profile applications, the proposed Tikhonov estimator
consistently matches or outperforms FPCA-FAR across all variance
thresholds and all 16 evaluation seasons.

Importantly, the improvement is achieved without any prior knowledge
of the ``correct'' number of functional principal components: the
cross-validated $\hat\alpha$ automatically selects the appropriate
degree of regularization for each training window. This contrasts
with the FPCA approach, where the practitioner must commit to a
threshold before seeing the data, and where different plausible
choices (80\% versus 99\%) lead to mean ISE differences of nearly
10\%.

\section{Discussion and conclusion}
\label{sec:discussion}

This paper has examined a fundamental but underappreciated brittleness
in the standard practice of fitting first-order functional
autoregressive models via functional principal component analysis. The
threshold-based selection of the truncation level $K$, while
operationally convenient and widely adopted, makes a discrete commitment
to a finite-dimensional approximation of the autoregressive operator
that can be far from optimal when the operator carries non-trivial mass
beyond the leading principal components. We have proposed a
continuous-regularization alternative based on Tikhonov regularization,
$\widehat{\Psi}_\alpha = \widehat{C}_1(\widehat{C}_0 + \alpha I)^{-1}$,
in which the discrete tuning choice is replaced by a continuous
regularization parameter selected by cross-validation, and we have
investigated its behavior theoretically, by simulation, and on real
environmental monitoring data.

\subsection{Limitations}
\label{subsec:disc-limitations}

Several limitations of the present work are worth acknowledging.

The cross-validation procedure for $\alpha$ that we recommend is
computationally heavier than the single eigendecomposition required by
threshold-based FPCA. Although the cost remains modest in absolute
terms (the rolling-window analysis of Section~\ref{sec:application}
completes in under thirty seconds for 2{,}635 evaluation days on a
standard laptop), in very high-frequency or very long functional time
series the computational gap may matter. Generalized cross-validation
\citep{Wahba1990} or fast leave-one-out approximations may offer
practical speedups; we have not investigated these in detail.

Our theoretical analysis is conducted under the assumption that the
data-generating process is exactly FAR(1) with bounded operator norm
and a source condition on $\Psi$. The robustness of Tikhonov-CV to
violations of these assumptions, in particular to higher-order lag
dependence, mild nonstationarity, or heavy-tailed innovations, is an
empirical question that we have only partially addressed through the
real-data application. Simulation evidence under FAR($p$) data with
$p > 1$ would clarify whether the robustness advantages we report
extend to misspecified-lag-order settings.

The empirical advantage we document on Vienna PM10 data, while
consistent across years, is modest in absolute magnitude (2 to 10
percent reductions in mean ISE depending on the FPCA threshold compared
against). In applications where the practitioner has strong prior
knowledge that justifies a particular truncation choice, the
advantage of continuous regularization over a well-chosen fixed $K$
may be small. The case for Tikhonov-CV is therefore strongest when
prior knowledge about effective dimension is weak.

Finally, we have focused exclusively on FAR(1). The
extension to FAR($p$) with $p > 1$ and to functional autoregressive
models with exogenous covariates (FARX) is conceptually
straightforward, since the corresponding finite-dimensional Tikhonov
problem remains well-posed, but raises questions about how to choose
the lag order $p$ jointly with the regularization parameter $\alpha$
that go beyond the scope of this paper.

\subsection{Directions for future research}
\label{subsec:disc-future}

Several extensions of the present work appear worth pursuing.

A natural generalization replaces the simple Tikhonov penalty
$\alpha I$ with a smoothness penalty
$\alpha \mathcal{L}^* \mathcal{L}$ for some differential operator
$\mathcal{L}$, in the spirit of penalized splines. This would
encourage the estimated operator $\widehat{\Psi}$ to be smooth in
both arguments and might improve performance in applications where
$\Psi$ has known regularity (for example, yield-curve modeling where
the operator should respect smooth maturity-dependence).

The source condition $\Psi = F C_0^\beta$ with $\beta > 0$ that
underpins our convergence analysis is conceptually transparent but
empirically unverifiable. Adaptive procedures that estimate the
effective $\beta$ from the data, in the spirit of the
\citet{LepskiMammen} method or of \citet{Cavalier2011}'s
risk-hull-minimization principle, would provide convergence rates
that are adaptive to the unknown smoothness of the true operator.
Concentration inequalities for the empirical covariance and
cross-covariance operators of weakly dependent functional time
series \citep{HormannKokoszka2010} would allow our convergence rates
to be sharpened from in-probability to non-asymptotic statements,
which is the standard form for high-dimensional regression results
and would enable confidence-band construction for $\widehat{\Psi}$.

Finally, the present comparison between Tikhonov regularization and
FPCA truncation is essentially a comparison between continuous and
discrete regularization schemes for an ill-posed inverse problem.
Other continuous regularization schemes (such as spectral cutoff with
smoothing, iterative methods like Landweber iteration, and Bayesian
formulations with conjugate Gaussian priors) are equally applicable
to the functional autoregressive setting, and a systematic comparison
across the simulation regimes we have constructed would clarify which
family of regularizers is most suitable for which type of operator
structure.

\subsection{Concluding remark}
\label{subsec:disc-conclude}

Functional time series methodology has matured rapidly over the past
two decades, with FAR(1) prediction via FPCA truncation now
established as a workhorse approach. The contribution of this paper
is to identify a specific and previously underappreciated weakness of
that approach, namely the discrete and often arbitrary nature of the
truncation choice, and to propose, analyze, and validate a
continuous-regularization alternative that is robust across the range
of operator structures we have examined. We hope the framework and
empirical evidence presented here will encourage broader adoption of
continuous regularization methods in applied functional time
series analysis, and we view the unification of regularization theory
across discrete and continuous schemes as a productive direction for
future methodological work.

\section*{Data availability statement}

The Vienna PM10 monitoring data used in Section~\ref{sec:application}
are publicly available from the TU Wien Research Data Repository at
\url{https://doi.org/10.48436/mtha8-w2406} \citep{TUWien2022}.
The Monte Carlo simulation code and the data analysis scripts are
available from the corresponding author upon reasonable request.

\section*{Conflict of interest statement}

The author declares no potential conflict of interest, financial or
otherwise, related to the research presented in this paper.

\bibliographystyle{agsm}  
\bibliography{references}

@book{Bosq2000,
  author    = {Bosq, Denis},
  title     = {Linear Processes in Function Spaces: Theory and Applications},
  series    = {Lecture Notes in Statistics},
  volume    = {149},
  publisher = {Springer-Verlag},
  address   = {New York},
  year      = {2000},
  doi       = {10.1007/978-1-4612-1154-9}
}

@book{Ramsay2005,
  author    = {Ramsay, James O. and Silverman, Bernard W.},
  title     = {Functional Data Analysis},
  edition   = {2nd},
  publisher = {Springer},
  address   = {New York},
  year      = {2005},
  doi       = {10.1007/b98888}
}

@book{HorvathKokoszka2012,
  author    = {Horv{\'a}th, Lajos and Kokoszka, Piotr},
  title     = {Inference for Functional Data with Applications},
  publisher = {Springer},
  address   = {New York},
  year      = {2012},
  doi       = {10.1007/978-1-4614-3655-3}
}

@book{KokoszkaReimherr2017,
  author    = {Kokoszka, Piotr and Reimherr, Matthew},
  title     = {Introduction to Functional Data Analysis},
  publisher = {Chapman and Hall/CRC},
  address   = {Boca Raton},
  year      = {2017},
  doi       = {10.1201/9781315117416}
}

@book{EnglHankeNeubauer1996,
  author    = {Engl, Heinz Werner and Hanke, Martin and Neubauer, Andreas},
  title     = {Regularization of Inverse Problems},
  series    = {Mathematics and Its Applications},
  volume    = {375},
  publisher = {Kluwer Academic Publishers},
  address   = {Dordrecht},
  year      = {1996},
  doi       = {10.1007/978-94-009-1740-8}
}

@article{HormannKokoszka2010,
  author  = {H{\"o}rmann, Siegfried and Kokoszka, Piotr},
  title   = {Weakly dependent functional data},
  journal = {The Annals of Statistics},
  volume  = {38},
  number  = {3},
  pages   = {1845--1884},
  year    = {2010},
  doi     = {10.1214/09-AOS768}
}

@article{HallHorowitz2007,
  author  = {Hall, Peter and Horowitz, Joel L.},
  title   = {Methodology and convergence rates for functional linear regression},
  journal = {The Annals of Statistics},
  volume  = {35},
  number  = {1},
  pages   = {70--91},
  year    = {2007},
  doi     = {10.1214/009053606000000957}
}

@article{CrambesKneipSarda2009,
  author  = {Crambes, Christophe and Kneip, Alois and Sarda, Pascal},
  title   = {Smoothing splines estimators for functional linear regression},
  journal = {The Annals of Statistics},
  volume  = {37},
  number  = {1},
  pages   = {35--72},
  year    = {2009},
  doi     = {10.1214/07-AOS563}
}

@article{AueNorinhoHormann2015,
  author  = {Aue, Alexander and Norinho, Diogo Dubart and H{\"o}rmann, Siegfried},
  title   = {On the prediction of stationary functional time series},
  journal = {Journal of the American Statistical Association},
  volume  = {110},
  number  = {509},
  pages   = {378--392},
  year    = {2015},
  doi     = {10.1080/01621459.2014.909317}
}

@incollection{Cavalier2011,
  author    = {Cavalier, Laurent},
  title     = {Inverse problems in statistics},
  booktitle = {Inverse Problems and High-Dimensional Estimation},
  editor    = {Alquier, Pierre and Gautier, Eric and Stoltz, Gilles},
  series    = {Lecture Notes in Statistics},
  volume    = {203},
  pages     = {3--96},
  publisher = {Springer},
  address   = {Berlin, Heidelberg},
  year      = {2011},
  doi       = {10.1007/978-3-642-19989-9_1}
}

@article{PaparoditisShang2021,
  author  = {Paparoditis, Efstathios and Shang, Han Lin},
  title   = {Bootstrap prediction bands for functional time series},
  journal = {Journal of the American Statistical Association},
  volume  = {},
  pages   = {},
  year    = {2021},
  note    = {Verify exact volume, issue, pages, and year; Paparoditis has several related papers in this period}
}

@article{reimherr2016estimating,
  title={Estimating variance components in functional linear models with applications to genetic heritability},
  author={Reimherr, Matthew and Nicolae, Dan},
  journal={Journal of the American Statistical Association},
  volume={111},
  number={513},
  pages={407--422},
  year={2016},
  publisher={Taylor \& Francis}
}

@misc{TUWien2022,
  author       = {Augustyn-Gal, R.},
  title        = {Historical Air Quality Data Vienna, 1986--2021},
  year         = {2022},
  howpublished = {TU Wien Research Data Repository},
  doi          = {10.48436/mtha8-w2406},
  note         = {Provided by Umweltbundesamt Austria}
}

@book{Wahba1990,
  author    = {Wahba, Grace},
  title     = {Spline Models for Observational Data},
  publisher = {SIAM},
  series    = {CBMS-NSF Regional Conference Series in Applied Mathematics},
  volume    = {59},
  year      = {1990},
}

@article{LepskiMammen,
  author  = {Lepski, O. V. and Mammen, E. and Spokoiny, V. G.},
  title   = {Optimal spatial adaptation to inhomogeneous smoothness:
             An approach based on kernel estimates with variable
             bandwidth selectors},
  journal = {The Annals of Statistics},
  volume  = {25},
  number  = {3},
  pages   = {929--947},
  year    = {1997},
}

\appendix

\subsection*{Appendix: Selection of \texorpdfstring{$\alpha$}{alpha} by cross-validation}
\label{subsec:methods-cv}

The regularization parameter $\alpha$ is selected by one-step-ahead
cross-validation on a held-out portion of the training path. Let $n_v =
\max(\lfloor 0.2 n \rfloor, 20)$ denote the size of the validation block,
and partition the training sample as $\mathcal{T}_{\mathrm{tr}} =
\{1, \dots, n - n_v\}$ and $\mathcal{T}_{\mathrm{val}} = \{n - n_v, \dots,
n\}$ (the first index of the validation set is shared with the last of
the training set so that the first validation prediction uses a lag from
within the training period). For each $\alpha$ in a pre-specified
log-spaced grid $\mathcal{A} = \{\alpha_1, \dots, \alpha_L\}$, the
validation criterion is
\begin{equation}
\label{eq:cv-obj}
  \mathrm{CV}(\alpha)
  \;=\;
  \frac{1}{n_v}
  \sum_{t \in \mathcal{T}_{\mathrm{val}} \setminus \{n-n_v\}}
  \Bigl\| \, X_t - \widehat{\Psi}_\alpha^{(\mathrm{tr})}\, X_{t-1} \, \Bigr\|^2,
\end{equation}
where $\widehat{\Psi}_\alpha^{(\mathrm{tr})}$ is computed from
$\mathcal{T}_{\mathrm{tr}}$ alone via~\eqref{eq:tikh-discrete}, and
$\|\cdot\|$ denotes the $L^2$ norm, approximated by trapezoidal
integration. The selected regularization parameter is $\widehat{\alpha}
= \arg\min_{\alpha \in \mathcal{A}} \mathrm{CV}(\alpha)$, and the final
estimator is $\widehat{\Psi}_{\widehat{\alpha}}$ refit on the full training
sample.

\paragraph{Grid specification.}
We use a log-spaced grid of $L = 25$ candidate values covering five
decades, $\mathcal{A} = \{10^{-5}, 10^{-5+5/24}, \dots, 10^0\}$, which we
find sufficient in practice: the selected $\widehat{\alpha}$ falls in the
interior of this range in all simulations, with the modal order of
magnitude shifting from $10^{-1.5}$ at $n = 100$ to $10^{-2}$ at $n = 800$.
For applications with markedly different scales of the response, the grid
should be rescaled by the order of magnitude of the eigenvalues of
$\widetilde{\mathbf{C}}_0$.

\paragraph{Computational cost.}
Naive implementation of~\eqref{eq:cv-obj} requires solving a linear
system of size $M \times M$ for each candidate $\alpha$, at cost
$O(LM^3)$. This can be reduced to a single eigendecomposition of
$\widetilde{\mathbf{C}}_0^{(\mathrm{tr})}$ (computed once, at cost
$O(M^3)$) followed by $O(M^2)$ work per candidate $\alpha$, by writing
$\widetilde{\mathbf{C}}_0^{(\mathrm{tr})} = \mathbf{Q}\,\mathbf{D}\,\mathbf{Q}^\top$
and evaluating $(\widetilde{\mathbf{C}}_0^{(\mathrm{tr})} + \alpha \mathbf{I})^{-1}
= \mathbf{Q}(\mathbf{D} + \alpha \mathbf{I})^{-1}\mathbf{Q}^\top$ directly. The
validation loss in~\eqref{eq:cv-obj} then factors through the rotated
validation data $\mathbf{Q}^\top \mathbf{W}^{1/2} \mathbf{x}_t$, and the
sum over $\alpha \in \mathcal{A}$ reduces to elementwise operations on
the eigenvalue vector. The resulting per-dataset cost of CV selection is
dominated by the single eigendecomposition and is therefore comparable to
fitting a single FPCA-FAR estimator. All simulations in this paper use
this implementation.

\end{document}